\newtheorem{pro}{Proposition}
\newtheorem{cor}{Corollary}
\newtheorem{lem}{Lemma}
\newtheorem{defi}{Definition}
\newtheorem{rem}{Remark}
\newtheorem{thm}{Theorem}
\theoremstyle{nonumberplain}
\newtheorem{proof}{Proof}
\renewcommand{\arraystretch}{1.5}
\renewcommand{\arraystretch}{1.5}
\begin{document}
\title{Local-set-based Graph Signal Reconstruction}
\author{Xiaohan~Wang, Pengfei~Liu, and Yuantao~Gu%
\thanks{
The authors are with the Department of Electronic Engineering, Tsinghua University, Beijing 100084, CHINA. The
corresponding author of this paper is Yuantao Gu (gyt@tsinghua.edu.cn).
}}

\date{Submitted Sep. 5, 2014; Revised Jan. 18, 2015; Accepted Feb. 11, 2015.}

\maketitle

\begin{abstract}
Signal processing on graph is attracting more and more attentions. For a graph signal in the low-frequency subspace, the missing data associated with unsampled vertices can be reconstructed through the sampled data by exploiting the smoothness of the graph signal.
In this paper, the concept of local set is introduced and two local-set-based iterative methods are proposed to reconstruct bandlimited graph signal from sampled data.
In each iteration, one of the proposed methods reweights the sampled residuals for different vertices, while the other propagates the sampled residuals in their respective local sets.
These algorithms are built on frame theory and the concept of local sets, based on which several frames and contraction operators are proposed. We then prove that the reconstruction methods converge to the original signal under certain conditions and demonstrate the new methods lead to a significantly faster convergence compared with the baseline method.
Furthermore, the correspondence between graph signal sampling and time-domain irregular sampling is analyzed comprehensively, which may be helpful to future works on graph signals. Computer simulations are conducted. The experimental results demonstrate the effectiveness of the reconstruction methods in various sampling geometries, imprecise priori knowledge of cutoff frequency, and noisy scenarios.

\textbf{Keywords:} graph signal processing, irregular domain, graph signal sampling and reconstruction, frame theory, local set, bandlimited subspace.
\end{abstract}

\section{Introduction}
\subsection{Signal Processing on Graph}
In recent years, the increasing demands for signal and information processing in irregular domains have resulted in an emerging field of signal processing on graphs \cite{shuman_emerging_2013,sandryhaila_discrete_2013}. Bringing a new perspective for analyzing data associated with graphs, graph signal processing has found potential applications in sensor networks \cite{zhu_graph_2012}, image processing \cite{narang_graph_2012}, semi-supervised learning \cite{gadde_active_2014}, and recommendation systems \cite{narang_signal_2013}.

An undirected graph is denoted as $\mathcal{G}(\mathcal{V}, \mathcal{E})$, where $\mathcal{V}$ denotes a set of $N$ vertices and $\mathcal{E}$ denotes the edge set. If one real number is associated with each vertex, these numbers of all the vertices are collectively referred as a graph signal. A graph signal can also be regarded as a mapping $f: \mathcal{V}\rightarrow \mathbb{R}$.

There has been lots of research on graph signal related problems, including graph filtering \cite{zhang_graph_2008, chen_adaptive_2013}, graph wavelets \cite{Crovella_Graph_2003, Coifman_Diffusion_2006, hammond_wavelets_2011, narang_perfect_2012}, uncertainty principle \cite{agaskar_aspectral_2013}, multiresolution transforms \cite{shuman_aframework_2013, ekambaram_multiresolution_2013}, graph signal compression \cite{zhu_approximating_2012}, graph signal sampling \cite{narang_localized_2013, anis_towards_2014}, parametric dictionary learning \cite{thanou_parametric_2013}, graph topology learning \cite{dong_learning_2014}, and graph signal coarsening \cite{liu_coarsening_2014}.

\subsection{Motivation and Related Works}

Smooth signals or approximately smooth signals over graph are common in practical applications \cite{sandryhaila_discrete_2013,narang_signal_2013,chen_adaptive_2013,chen_signal_2014}, especially for those cases in which the graph topologies are constructed to enforce the smoothness property of signals \cite{dong_learning_2014}. Exploiting the smoothness of a graph signal, it may be reconstructed through its entries on only a part of the vertices, i.e. samples of the graph signal.

In this work, we develop efficient methods to solve the problem of reconstructing a bandlimited graph signal from known samples. The smooth signal is supposed to be within a low-frequency subspace. Two iterative methods are proposed to recover the missing entries from known sampled data.

There has been some theoretical analysis on the sampling and reconstruction of bandlimited graph signals \cite{pesenson_sampling_2008, pesenson_variational_2009,pesenson_sampling_2010,fuhr_poincare_2013}.
Some existing works focus on the theoretical conditions for the exact reconstruction of bandlimited signals. The relationships between the sampling sets of unique reconstruction and the cutoff frequency of bandlimited signal space are established for normalized Laplacian \cite{pesenson_sampling_2008} and unnormalized Laplacian \cite{pesenson_sampling_2010,fuhr_poincare_2013}, respectively.
Recently, a necessary and sufficient condition of exact reconstruction is established in \cite{anis_towards_2014}. In order to reconstruct bandlimited graph signals from sampled data, several methods have been proposed. In \cite{narang_signal_2013} a least square approach is proposed to solve this problem. Furthermore, an iterative reconstruction method is proposed and a tradeoff between smoothness and data-fitting is introduced for real world applications \cite{narang_localized_2013}.

The problem of signal reconstruction is closely related to the frame theory, which is also involved in other areas of graph signal processing, e.g., wavelet and vertex-frequency analysis on graphs \cite{hammond_wavelets_2011}. Based on windowed graph Fourier transform and vertex-frequency analysis, windowed graph Fourier frames are studied in \cite{shuman_vertex_2013}. A spectrum-adapted tight vertex-frequency frame is proposed in \cite{shuman_spectrum_2013} via translation on the graph. These works focus on vertex-frequency frames whose elements make up over-representation dictionaries, while in the reconstruction problem the frames are always composed by elements centering at the vertices in the sampling sets.

\subsection{Contributions}

In this paper, to improve the convergence rate of bandlimited graph signal reconstruction, iterative weighting reconstruction (IWR) and iterative propagating reconstruction (IPR) are proposed based a new concept of local set. As the foundation of reconstruction methods, several local-set-based frames and contraction operators are introduced. Both IWR and IPR are theoretically proved to uniquely reconstruct the original signal under certain conditions. Compared with existing methods, the condition of the proposed reconstruction methods is easy to determine by local parameters. The correspondence between graph signal sampling and time-domain irregular sampling is analyzed comprehensively, which will be helpful to future works on graph signals. Experiments show that IWR and IPR converge significantly faster than available methods. Besides, experiments on several topics including sampling geometry and robustness are conducted.

The rest of this paper is organized as follows. In Section \ref{sec2}, some preliminaries are introduced. In Section \ref{sec3}, some important definitions are introduced and some related frames based on local sets are proved.
In Section \ref{sec4}, two local-set-based reconstruction methods IWR and IPR are proposed and their convergence behavior is analyzed, respectively.
Section \ref{discusslocalset} gives more detailed analysis on local sets.
Section \ref{sec6} shows the relationship between graph signal sampling and time-domain irregular sampling and Section \ref{sec7} presents some numerical experiments.

\section{Preliminaries}\label{sec2}

\subsection{Graph Laplacian and Bandlimited Graph Signals}

The graph Laplacian is  extensively exploited in spectral graph theory \cite{chung_spectral_1997} and signal processing on graphs \cite{shuman_emerging_2013}.
For a undirected graph $\mathcal{G}(\mathcal{V}, \mathcal{E})$, its Laplacian is
$$\bf{L=D-A},$$
where $\bf{A}$ is the adjacency matrix of the graph and $\bf{D}$ is a diagonal degree matrix with the diagonal elements as the degrees of corresponding vertices.

The Laplacian is a real symmetric matrix, and all the eigenvalues are nonnegative.
Supposing $\{\lambda_k\}$ are the eigenvalues, and $\{{\bf u}_k\}$ are the corresponding eigenvectors,
the graph Fourier transform is defined as the expansion of a graph signal ${\bf f}$ in terms of $\{{\bf u}_k\}$, as
$$
\hat{f}(k)=\langle {\bf f}, {\bf u}_k\rangle=\sum_{i=1}^Nf(i)u_k(i),
$$
where $f(i)$ denotes the entry of ${\bf f}$ associated with vertex $i$.
Similar with classical Fourier analysis, eigenvalues $\{\lambda_k\}$ are regarded as frequencies of the graph, and $\hat{f}(k)$ is regarded as the frequency component corresponding to $\lambda_k$.
The frequency components associated with smaller eigenvalues can be called low-frequency part, and those associated with larger eigenvalues is the high-frequency part.

For a graph signal ${\mathbf f}\in\mathbb{R}^N$ on a graph $\mathcal{G}(\mathcal{V},\mathcal{E})$,
${\mathbf f}$ is called $\omega$-bandlimited if the spectral support of ${\mathbf f}$ is within $[0,\omega]$.
That is, the frequency components corresponding to eigenvalues larger than $\omega$ are all zero.
The subspace of $\omega$-bandlimited signals on graph $\mathcal{G}$ is a Hilbert space called Paley-Wiener space, denoted as $PW_{\omega}(\mathcal{G})$ \cite{pesenson_sampling_2008}.

In this paper, we consider the sampling and reconstruction of bandlimited signals on undirected and unweighted graphs. Suppose that for a bandlimited graph signal ${\mathbf f}\in PW_{\omega}(\mathcal{G})$, only $\{f(u)\}_{u\in \mathcal{S}}$ on the sampling set $\mathcal{S}\subseteq \mathcal{V}$ are known, the problem is to obtain the original signal ${\mathbf f}$ from the sampled data.

\subsection{Frame Theory and Signal Reconstruction}

The problem of signal sampling and reconstruction is closely related to frame theory.

\begin{defi}[frame and frame bound]
A family of elements $\{{\mathbf f}_i\}_{i\in \mathcal{I}}$ is a \emph{frame} for a Hilbert space $\mathcal{H}$, if there exist constants $0<A\le B$ such that
$$
A\|{\mathbf f}\|^2\le\sum_{i\in \mathcal{I}}|\langle {\mathbf f}, {\mathbf f}_i\rangle|^2\le B\|{\mathbf f}\|^2,\quad \forall {\mathbf f}\in \mathcal{H},
$$
where $A$ and $B$ are called \emph{frame bounds}.
\end{defi}

\begin{defi}[frame operator]\label{def:frameoperator}
For a frame $\{{\mathbf f}_i\}_{i\in \mathcal{I}}$, \emph{frame operator} ${\bf S}: \mathcal{H}\rightarrow \mathcal{H}$ is defined as
$$
{\mathbf{Sf}}=\sum_{i\in \mathcal{I}}\langle {\mathbf f}, {\mathbf f}_i\rangle {\mathbf f}_i.
$$
\end{defi}

One may readily read that $A{\mathbf I}\preceq {\mathbf S}\preceq B{\mathbf I}$ for $\mathcal{H}$, where ${\mathbf I}$ denotes the identity operator and $A{\mathbf I}\preceq {\mathbf S}$ means that ${\mathbf S}-A{\mathbf I}$ is positive semidefinite. Consequently, ${\mathbf S}$ is always invertible and its inverse could be expanded into series in some special cases. For instance, one has
$$
{\mathbf f}={\bf S}^{-1}{\bf S}{\mathbf f}=\mu \sum_{j=0}^{\infty}({\bf I}-\mu {\bf S})^j{\bf S}{\mathbf f},
$$
where $\mu$ is a scalar satisfying $\|{\bf I}-\mu {\bf S}\|<1$. This inspires that $\bf f$ could be iteratively reconstructed from any initial point ${\bf f}^{(0)}$ by
\begin{align}
{\mathbf f}^{(k+1)}&=\mu {\bf S}{\mathbf f}+({\bf I}-\mu {\bf S}){\mathbf f}^{(k)}\nonumber\\
&={\mathbf f}^{(k)}+\mu {\bf S}({\mathbf f}-{\mathbf f}^{(k)}),\label{iteration}
\end{align}
with the error bound satisfying
$$
\|{\mathbf f}^{(k)}-{\mathbf f}\|\le\|{\bf I}-\mu {\bf S}\|^{k}\|{\mathbf f}^{(0)}-{\mathbf f}\|.
$$
Obviously, recursion \eqref{iteration} cannot be entitled \emph{reconstruction} because the original signal to be recovered is involved in the iteration. However, it provides a prototype for practical methods, which will be discussed in section \ref{sec4}.

The parameter $\mu$, which could be deemed as a step-size, determines the convergence rate.
If one chooses $\mu=1/B$, then $\|{\bf I}-\mu {\bf S}\|\le1-A/B<1$, and the error bound of iteration (\ref{iteration}) will shrink with the exponential of $(1-A/B)$.
A better choice is $\mu=2/(A+B)$, then $\|{\bf I}-\mu {\bf S}\|\le(B-A)/(B+A)$, which leads to a faster convergence rate \cite{Christensen_an_2002}.

\subsection{Bandlimited Graph Signal Reconstruction}
\label{Previous}

There are many useful theoretical results
\footnote{
It is necessary to notice that some of the theoretical results are based on normalized Laplacian. However, similar results can be easily obtained for Laplacian, which is mainly used in this work.
}
on the problem of bandlimited graph signal sampling and reconstruction. A concept of \emph{uniqueness set} is firstly introduced in \cite{pesenson_sampling_2008}.

\begin{defi}[uniqueness set]
\cite{pesenson_sampling_2008} A set of vertices $\mathcal{S}\subseteq \mathcal{V}(\mathcal{G})$ is a \emph{uniqueness set} for space $PW_{\omega}(\mathcal{G})$ if it holds for all ${\mathbf f}, {\mathbf g} \in PW_{\omega}(\mathcal{G})$ that $f(u)=g(u), \forall u\in\mathcal{S}$ implies ${\mathbf f}={\mathbf g}$.
\end{defi}

According to this definition, any ${\mathbf f}\in PW_{\omega}(\mathcal{G})$ could be uniquely determined by its entries on a uniqueness set $\mathcal{S}$. As a consequence, ${\bf f}$ may be exactly recovered if the sampling set is a uniqueness set. Readers are suggested to refer to \cite{pesenson_sampling_2008}, \cite{narang_signal_2013}, and \cite{anis_towards_2014} for more details on uniqueness set.

The following theorem demonstrates that a set of graph signals related to a uniqueness set becomes a frame for $PW_{\omega}(\mathcal{G})$, which is a quite important foundation of our work.

\begin{thm}\label{thm3}
\cite{pesenson_sampling_2008} If the sampling set $\mathcal{S}$ is a uniqueness set for $PW_{\omega}(\mathcal{G})$, then $\{\mathcal{P}_{\omega}(\bm{\delta}_u)\}_{u\in \mathcal{S}}$ is a frame for $PW_{\omega}(\mathcal{G})$, where $\mathcal{P}_{\omega}(\cdot)$ is the projection operator onto $PW_{\omega}(\mathcal{G})$, and $\bm{\delta}_u$ is a $\delta$-function whose entries satisfying
$$
\delta_u(v)=
\begin{cases}
1, & v=u; \\
0, & v\neq u.
\end{cases}
$$
\end{thm}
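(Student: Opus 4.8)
The plan is to establish the two frame inequalities directly, after rewriting the middle quantity in a more transparent form. Since $\mathcal{P}_{\omega}$ is the orthogonal projection onto $PW_{\omega}(\mathcal{G})$, it is self-adjoint and idempotent, so for any ${\mathbf f}\in PW_{\omega}(\mathcal{G})$ we have $\langle {\mathbf f}, \mathcal{P}_{\omega}(\bm{\delta}_u)\rangle = \langle \mathcal{P}_{\omega}({\mathbf f}), \bm{\delta}_u\rangle = \langle {\mathbf f}, \bm{\delta}_u\rangle = f(u)$. Thus the sum to be sandwiched is simply $\sum_{u\in\mathcal{S}}|f(u)|^2$, the energy of the samples of ${\mathbf f}$ on $\mathcal{S}$.

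The upper bound is then immediate, since $\sum_{u\in\mathcal{S}}|f(u)|^2 \le \sum_{v\in\mathcal{V}}|f(v)|^2 = \|{\mathbf f}\|^2$; so $B=1$ is an admissible upper frame bound.

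For the lower bound I would introduce the sampling map $T\colon PW_{\omega}(\mathcal{G})\to\mathbb{R}^{|\mathcal{S}|}$ defined by $T{\mathbf f}=(f(u))_{u\in\mathcal{S}}$, which is linear, and observe that the uniqueness-set hypothesis is exactly the statement that $T$ is injective: applying the definition with ${\mathbf g}=\mathbf{0}$ shows $\ker T=\{\mathbf{0}\}$. Now consider the continuous function $\phi({\mathbf f})=\|T{\mathbf f}\|^2=\sum_{u\in\mathcal{S}}|f(u)|^2$ restricted to the unit sphere $\Sigma=\{{\mathbf f}\in PW_{\omega}(\mathcal{G}):\|{\mathbf f}\|=1\}$. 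Because $PW_{\omega}(\mathcal{G})$ is finite-dimensional, $\Sigma$ is compact, so $\phi$ attains a minimum value $A$ on $\Sigma$; injectivity of $T$ forces the minimizer to lie outside $\ker T$, hence $A>0$. Finally, homogeneity of $\phi$ gives $\sum_{u\in\mathcal{S}}|f(u)|^2\ge A\|{\mathbf f}\|^2$ for every ${\mathbf f}\in PW_{\omega}(\mathcal{G})$, and together with the upper bound this exhibits $\{\mathcal{P}_{\omega}(\bm{\delta}_u)\}_{u\in\mathcal{S}}$ as a frame with bounds $0<A\le B=1$.

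The only substantive obstacle is the strict positivity of the lower frame bound $A$; the rest is bookkeeping. It rests on two features of this finite-dimensional setting: the unit sphere of $PW_{\omega}(\mathcal{G})$ is compact, so the quadratic form $\phi$ actually attains its infimum rather than merely approaching it, and the uniqueness property prevents that infimum from being zero. (In an infinite-dimensional Paley--Wiener space one would instead need a quantitative lower estimate, but that complication does not arise here.)
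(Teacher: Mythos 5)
Your proof is correct. Note first that the paper does not actually prove this theorem: it is imported verbatim from Pesenson's 2008 paper as background, so there is no in-paper argument to compare against line by line. Your reduction of the frame sum to the sample energy, via $\langle {\mathbf f}, \mathcal{P}_{\omega}(\bm{\delta}_u)\rangle = f(u)$ for ${\mathbf f}\in PW_{\omega}(\mathcal{G})$, is exactly the identity the authors themselves use later (e.g.\ in reducing the frame operator to \eqref{eq:frameonuniquesetreduced}), and the upper bound $B=1$ is immediate as you say. For the lower bound, your compactness argument is sound: $PW_{\omega}(\mathcal{G})$ is a subspace of $\mathbb{R}^N$, hence finite-dimensional, the uniqueness-set hypothesis is precisely injectivity of the sampling map $T$, and a continuous nonnegative function that never vanishes on the compact unit sphere attains a strictly positive minimum. (A small stylistic point: the minimizer lies outside $\ker T$ simply because it has unit norm and $\ker T=\{\mathbf 0\}$; injectivity is what makes $\phi$ nonvanishing there, not what relocates the minimizer.) The trade-off worth being aware of is that your argument is purely existential — it certifies $A>0$ without producing a number — whereas the route the paper takes for its own, related results (Lemma \ref{lem0} and Proposition \ref{pro3}) derives explicit bounds such as $(1-\gamma)^2/N_{\max}$ from local-set structure and a Poincar\'e-type inequality, at the price of the stronger hypothesis $\omega<1/Q_{\max}^2$. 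Your proof is the more elementary and more general one for the bare frame statement; the paper's machinery exists precisely because the quantitative bounds, not mere frame-ness, govern the convergence rates of the reconstruction algorithms.
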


A method called iterative least square reconstruction (ILSR) is proposed to reconstruct bandlimited graph signals in \cite{narang_localized_2013} as the following theorem.

\begin{thm}\label{thm:ILSR}
\cite{narang_localized_2013} If the sampling set $\mathcal{S}$ is a uniqueness set for $PW_{\omega}(\mathcal{G})$, then the original signal ${\bf f}$ can be reconstructed using the sampled data $\{f(u)\}_{u\in\mathcal{S}}$
by ILSR method,
\begin{equation}\label{ilsr}
{\bf f}^{(k+1)}={\mathcal P}_{\omega}({\bf f}^{(k)}+{\bf J}^{\rm T}{\bf J}({\bf f}_{\text{du}}-{\bf f}^{(k)})),
\end{equation}
where ${\bf J}$ denotes the downsampling operator and ${\bf f}_{\text{du}}$ is the downsampled signal.
\end{thm}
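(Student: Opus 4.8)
The plan is to track the reconstruction error $\mathbf{e}^{(k)}:=\mathbf{f}^{(k)}-\mathbf{f}$ and show it vanishes geometrically. First I would abbreviate $\mathbf{M}:=\mathbf{J}^{\rm T}\mathbf{J}$, the diagonal $0/1$ matrix that projects onto the coordinates indexed by $\mathcal{S}$, and note that $\mathbf{M}\mathbf{f}$ encodes exactly the available samples, so that iteration \eqref{ilsr} reads $\mathbf{f}^{(k+1)}=\mathcal{P}_\omega\big((\mathbf{I}-\mathbf{M})\mathbf{f}^{(k)}+\mathbf{M}\mathbf{f}\big)$. Since $\mathbf{f}\in PW_\omega(\mathcal{G})$ we have $\mathbf{f}=\mathcal{P}_\omega\mathbf{f}=\mathcal{P}_\omega\big((\mathbf{I}-\mathbf{M})\mathbf{f}+\mathbf{M}\mathbf{f}\big)$; subtracting this from the iteration yields the clean error recursion
\begin{equation*}
\mathbf{e}^{(k+1)}=\mathcal{P}_\omega(\mathbf{I}-\mathbf{M})\,\mathbf{e}^{(k)}.
\end{equation*}
Every iterate $\mathbf{f}^{(k)}$ with $k\ge 1$ (and $\mathbf{f}^{(0)}$ as well, if it is chosen in $PW_\omega(\mathcal{G})$) lies in the range of $\mathcal{P}_\omega$, so $\mathbf{e}^{(k)}\in PW_\omega(\mathcal{G})$ from that index onward. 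Hence it suffices to study the self-adjoint operator $\mathbf{T}:=\mathcal{P}_\omega(\mathbf{I}-\mathbf{M})\mathcal{P}_\omega=\mathbf{I}-\mathcal{P}_\omega\mathbf{M}\mathcal{P}_\omega$ on the Hilbert space $PW_\omega(\mathcal{G})$ and prove $\|\mathbf{T}\|<1$, since then $\mathbf{e}^{(k)}=\mathbf{T}^{k-1}\mathbf{e}^{(1)}$ gives $\|\mathbf{e}^{(k)}\|\le\|\mathbf{T}\|^{k-1}\|\mathbf{e}^{(1)}\|\to 0$.

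The key step is to bound the quadratic form of $\mathcal{P}_\omega\mathbf{M}\mathcal{P}_\omega$ on $PW_\omega(\mathcal{G})$ using frame theory. For $\mathbf{g}\in PW_\omega(\mathcal{G})$, using $g(u)=\langle\mathbf{g},\bm{\delta}_u\rangle=\langle\mathbf{g},\mathcal{P}_\omega\bm{\delta}_u\rangle$, one has
\begin{equation*}
\langle\mathcal{P}_\omega\mathbf{M}\mathcal{P}_\omega\mathbf{g},\mathbf{g}\rangle=\langle\mathbf{M}\mathbf{g},\mathbf{g}\rangle=\|\mathbf{M}\mathbf{g}\|^2=\sum_{u\in\mathcal{S}}|g(u)|^2=\sum_{u\in\mathcal{S}}\big|\langle\mathbf{g},\mathcal{P}_\omega\bm{\delta}_u\rangle\big|^2.
\end{equation*}
Because $\mathcal{S}$ is a uniqueness set, Theorem \ref{thm3} guarantees that $\{\mathcal{P}_\omega(\bm{\delta}_u)\}_{u\in\mathcal{S}}$ is a frame for $PW_\omega(\mathcal{G})$ with bounds $0<A\le B$, so the displayed sum lies between $A\|\mathbf{g}\|^2$ and $B\|\mathbf{g}\|^2$; moreover $\|\mathbf{M}\mathbf{g}\|\le\|\mathbf{g}\|$ forces $B\le 1$. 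Therefore $(1-B)\mathbf{I}\preceq\mathbf{T}\preceq(1-A)\mathbf{I}$ on $PW_\omega(\mathcal{G})$ with $0\le 1-B\le 1-A<1$, and since $\mathbf{T}$ is self-adjoint its operator norm is $\|\mathbf{T}\|=1-A<1$. Consequently $\|\mathbf{f}^{(k)}-\mathbf{f}\|\le(1-A)^{k-1}\|\mathbf{f}^{(1)}-\mathbf{f}\|\to 0$, exactly as in the error-bound argument following \eqref{iteration}.

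I expect the only genuine obstacle is the middle identity: recognizing that the sampling quadratic form $\langle\mathbf{M}\mathbf{g},\mathbf{g}\rangle$ restricted to $PW_\omega(\mathcal{G})$ equals the frame sum $\sum_{u\in\mathcal{S}}|\langle\mathbf{g},\mathcal{P}_\omega\bm{\delta}_u\rangle|^2$, which is precisely what converts the uniqueness-set hypothesis, via Theorem \ref{thm3}, into the strictly positive lower bound $A>0$ that makes $\mathbf{T}$ a strict contraction. The surrounding bookkeeping — checking that the error iterates remain in $PW_\omega(\mathcal{G})$, self-adjointness of $\mathbf{T}$, and the norm formula for a self-adjoint operator sandwiched between multiples of $\mathbf{I}$ — is routine.
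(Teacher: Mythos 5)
Your proof is correct and is essentially the argument the paper itself adopts: the operator $\mathcal{P}_\omega\mathbf{M}\mathcal{P}_\omega$ you analyze, restricted to $PW_\omega(\mathcal{G})$, is exactly the frame operator $\mathbf{S}$ of $\{\mathcal{P}_{\omega}(\bm{\delta}_u)\}_{u\in \mathcal{S}}$ in \eqref{eq:frameonuniquesetreduced}, so your contraction bound $\|\mathbf{I}-\mathcal{P}_\omega\mathbf{M}\mathcal{P}_\omega\|\le 1-A<1$ (with $A>0$ supplied by Theorem \ref{thm3}) is precisely the paper's error bound for iteration \eqref{iteration} with $\mu=1$. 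The paper formally defers this theorem to the cited reference's POCS/fixed-point proof, but its own frame-based reformulation of ILSR in Section \ref{sec4} follows the same route you took, up to the cosmetic difference that you verify $A\mathbf{I}\preceq\mathbf{S}\preceq B\mathbf{I}$ via the quadratic form rather than quoting it as a standard frame fact.
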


ILSR is derived from the method of projection onto convex sets (POCS). Its convergence is proved using the fixed point theorem of contraction mapping.

\section{Local-Set-Based Frame and Contraction}
\label{sec3}

In this section, the concept of \emph{local set} is firstly proposed. Based on local sets, we define an operator named \emph{local propagation} and prove its contraction. Then several local-set-based frames are introduced, as the theoretical foundation of the proposed methods in next section.

\subsection{Local Sets}

\begin{defi}[local sets]
For a sampling set $\mathcal{S}$ on graph $\mathcal{G}(\mathcal{V},\mathcal{E})$, assume that $\mathcal{V}$ is divided into disjoint local sets $\{\mathcal{N}(u)\}_{u\in \mathcal{S}}$ associated with the sampled vertices. For each $u\in\mathcal{S}$, denote the subgraph of $\mathcal{G}$ restricted to $\mathcal{N}(u)$ by $\mathcal{G}_{\mathcal{N}(u)}$, which is composed of vertices in $\mathcal{N}(u)$ and edges between them in $\mathcal{E}$. For each $u\in\mathcal{S}$, its local set satisfies $\mathcal{N}(u)\ni u$, and the subgraph $\mathcal{G}_{\mathcal{N}(u)}$ is connected.  Besides, $\{\mathcal{N}(u)\}_{u\in \mathcal{S}}$ should satisfy
$$
\mathcal{N}(u)\cap \mathcal{N}(v)=\emptyset, \quad \forall u, v\in\mathcal{S} \,\textrm{and}\, u\neq v,
$$
and
$$
\bigcup_{u\in \mathcal{S}} \mathcal{N}(u)=\mathcal{V}.
$$

We further define
$$
	N_{\rm max}=\max\limits_{u\in\mathcal{S}}|\mathcal{N}(u)|,
$$
to denote the maximal size of local sets, where $|\cdot|$ denotes cardinality.
\end{defi}

For a given sampling set, there may exist various divisions of local sets.
We will see in the next section that different divisions may lead to different theoretical bounds in recovering bandlimited signals.
To describe the property of local sets, two measures are proposed in Definition \ref{defi1} and Definition \ref{radius}, which are useful in the following analysis.

\begin{defi}[maximal multiple number]\label{defi1}
Denote
$$
\mathcal{T}(u)={\rm SPT}(\mathcal{G}_{\mathcal{N}(u)})
$$
as the shortest-path tree of $\mathcal{G}_{\mathcal{N}(u)}$ rooted at $u$.
For $v$ connected to $u$ in $\mathcal{T}(u)$, $\mathcal{T}_u(v)$ is the subtree which $v$ belongs to when $u$ and its associated edges are removed from $\mathcal{T}(u)$.
The \emph{maximal multiple number} of $\mathcal{N}(u)$ is defined as
$$
K(u)=\max_{(u,v)\in \mathcal{E}(\mathcal{T}(u))}|\mathcal{T}_u(v)|,
$$
where $\mathcal{E}(\mathcal{T}(u))$ is the edge set of graph $\mathcal{T}(u)$.
\end{defi}

\begin{figure}[t]
\begin{center}
\includegraphics[width=10cm]{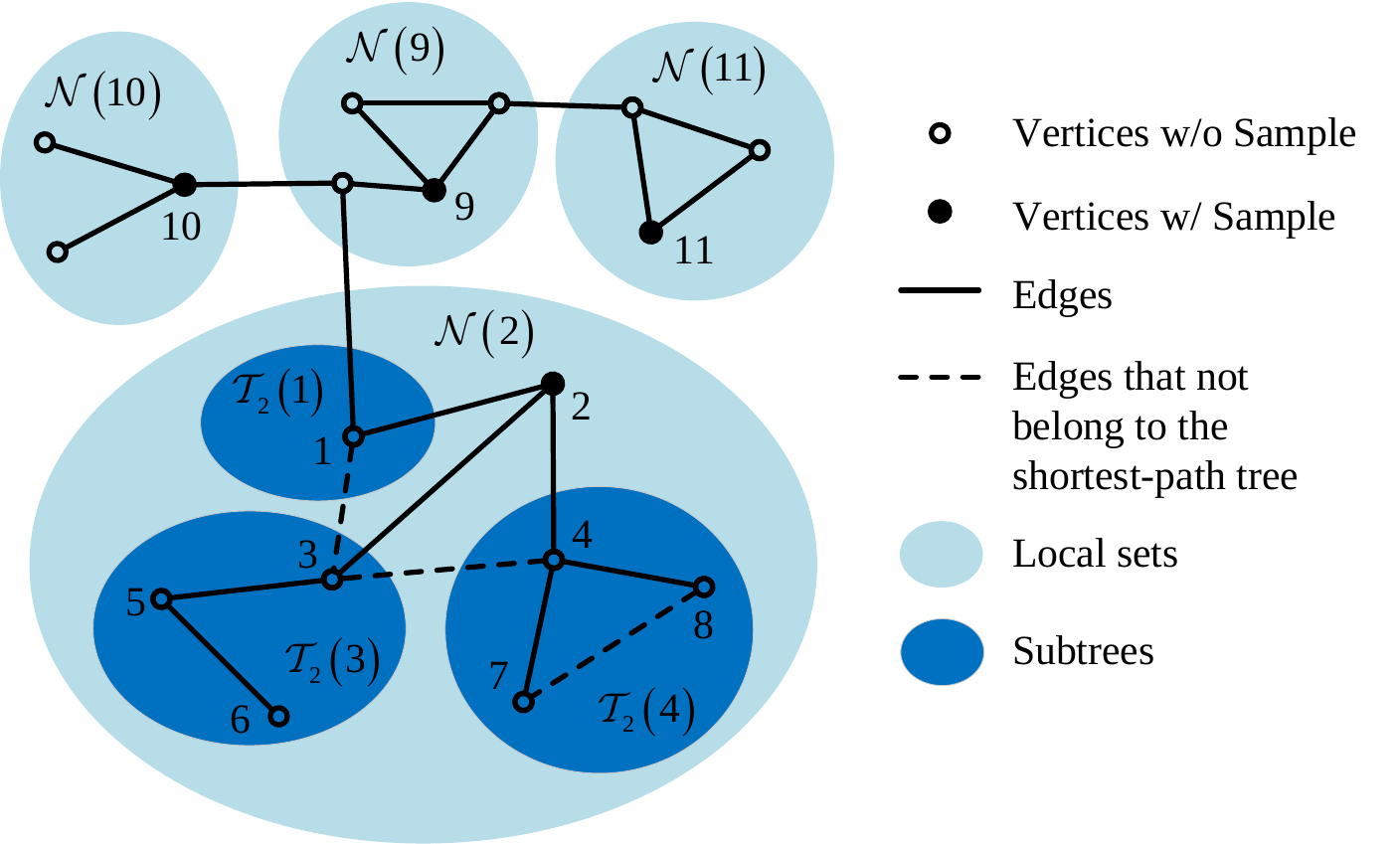}
\caption{
Illustrations of an example graph $\mathcal{G}$, sampling set ${\mathcal S}=\{2, 9, 10, 11\}$, and one of the divisions of local sets $\{\mathcal{N}(2), \mathcal{N}(9), \mathcal{N}(10), \mathcal{N}(11)\}$. The shortest-path tree of $\mathcal{N}(2)$ and its subtrees of $\mathcal{T}_2(1), \mathcal{T}_2(3), \mathcal{T}_2(4)$ are highlighted to give more details. In this case, $|\mathcal{N}(2)|=8$ and the subtrees have $1$, $3$, and $3$ vertices, respectively. Therefore, $K(2)=3$ and it is easy to check that $\tilde{K}(2)=|\mathcal{N}(2)|-d_{\mathcal{N}(2)}(2)=8-3=5$.
}
\label{Tu}
\end{center}
\end{figure}

\begin{rem}
By the definition of $K(u)$, it is ready to check that
\begin{equation}\label{ku}
K(u)\le |\mathcal{N}(u)|-d_{\mathcal{N}(u)}(u)\le |\mathcal{N}(u)|-1,
\end{equation}
where $d_{\mathcal{N}(u)}(u)$ is the degree of $u$ in the subgraph $\mathcal{G}_{\mathcal{N}(u)}$.
For simplicity, one may introduce an approximation for easy calculation of $K(u)$ by
\begin{equation}\label{tildeku}
\tilde{K}(u)=|\mathcal{N}(u)|-d_{\mathcal{N}(u)}(u).
\end{equation}
\end{rem}

The definitions above are intuitively illustrated in Fig. \ref{Tu}.

\begin{defi}[radius]\label{radius}
The \emph{radius} of $\mathcal{N}(u)$ is the maximal distance from $u$ to any other vertex in $\mathcal{G}_{\mathcal{N}(u)}$, which is denoted as
$$
R(u)=\max_{v\in \mathcal{N}(u)}\text{dist}(v,u).
$$
\end{defi}

According to the definitions, one may see that the two measures are local and only determined by the subgraph of local sets.
The two local measures are helpful in establishing the conditions of some important results in this paper, which will be shown in the following subsections.

\subsection{Local Propagation and Contraction}

Utilizing the introduced maximal multiple number and radius, we further propose an operation to propagate energy to one's local set.

\begin{defi}[local propagation]\label{limitedpropagation}
For a given sampling set $\mathcal{S}$ and associated local sets $\{\mathcal{N}(u)\}_{u\in\mathcal{S}}$ on a graph $\mathcal{G}(\mathcal{V},\mathcal{E})$, the local propagation ${\bf G}$ is defined by
\begin{align}
{\bf G}{\mathbf f}&=\mathcal{P}_{\omega}\left(\sum_{u\in \mathcal{S}}f(u)\bm{\delta}_{\mathcal{N}(u)}\right)\label{eq:limitedpropagation1}\\
&= \sum_{u\in \mathcal{S}}f(u)\mathcal{P}_{\omega}\!\left(\bm{\delta}_{\mathcal{N}(u)}\right),\label{eq:limitedpropagation2}
\end{align}
where $\bm{\delta}_{\mathcal{N}(u)}$ denotes the $\delta$-function of set $\mathcal{N}(u)$ with entries
$$
\delta_{\mathcal{N}(u)}(v)=
\begin{cases}
1, & v\in \mathcal{N}(u);\\
0, & v\notin \mathcal{N}(u).
\end{cases}
$$
\end{defi}

As its name shows, operation $\bf G$ first propagates the energy locally and evenly to the local set that each sampled vertex belongs to, and then projects the new signal to be $\omega$-bandlimited, please refer to \eqref{eq:limitedpropagation1}. These two steps could be merged into one, by a bandlimited local propagation of $\mathcal{P}_{\omega}\left(\bm{\delta}_{\mathcal{N}(u)}\right)$, please refer to \eqref{eq:limitedpropagation2}. Local propagation, which provides a fast solution to adequately fill all unknown entries by sampled data, makes the proposed local-set-based reconstruction feasible.

As an important theoretical foundation, the following lemma gives the condition that $({\bf I-G})$ is a contraction mapping.

\begin{lem}\label{lemma1}
For a given set $\mathcal{S}$ and associated local sets $\{\mathcal{N}(u)\}_{u\in\mathcal{S}}$ on a graph $\mathcal{G}(\mathcal{V},\mathcal{E})$, $\forall \omega<1/Q_{\rm max}^2$, the operator $({\bf I-G})$ is a contraction mapping for $PW_{\omega}(\mathcal{G})$, where
\begin{equation}\label{eq:defineQmax}
	Q_{\text{max}} =\max_{u\in \mathcal{S}}\sqrt{K(u)R(u)}.
\end{equation}
\end{lem}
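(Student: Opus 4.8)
The plan is to establish the operator-norm bound $\|{\bf I}-{\bf G}\|<1$ on $PW_{\omega}(\mathcal{G})$; since ${\bf G}$ is linear, this is precisely the assertion that $({\bf I}-{\bf G})$ is a contraction. First I would use the hypothesis ${\bf f}\in PW_{\omega}(\mathcal{G})$ to write ${\bf f}=\mathcal{P}_{\omega}{\bf f}$, so that, by \eqref{eq:limitedpropagation1} together with the partition properties $\mathcal{N}(u)\cap\mathcal{N}(v)=\emptyset$ and $\bigcup_{u\in\mathcal{S}}\mathcal{N}(u)=\mathcal{V}$,
$$
({\bf I}-{\bf G}){\bf f}=\mathcal{P}_{\omega}\left({\bf f}-\sum_{u\in\mathcal{S}}f(u)\bm{\delta}_{\mathcal{N}(u)}\right)=\mathcal{P}_{\omega}\left(\sum_{u\in\mathcal{S}}\sum_{v\in\mathcal{N}(u)}\bigl(f(v)-f(u)\bigr)\bm{\delta}_v\right).
$$
Since $\mathcal{P}_{\omega}$ is an orthogonal projection ($\|\mathcal{P}_{\omega}\|\le1$) and the $\bm{\delta}_v$ are orthonormal, this gives
$$
\|({\bf I}-{\bf G}){\bf f}\|^2\le\sum_{u\in\mathcal{S}}\sum_{v\in\mathcal{N}(u)}|f(v)-f(u)|^2 .
$$
The diagonal term $v=u$ contributes nothing, so the right-hand side is a sum of squared increments of $f$ taken inside the connected subgraphs $\mathcal{G}_{\mathcal{N}(u)}$, and the remaining task is to bound it by ${\bf f}^{\rm T}{\bf L}{\bf f}$.

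Second, for each $u\in\mathcal{S}$ and $v\in\mathcal{N}(u)$ I would route the increment $f(v)-f(u)$ along the unique $u$--$v$ path in the shortest-path tree $\mathcal{T}(u)$ of Definition \ref{defi1}. This path has at most $R(u)$ edges, so telescoping and Cauchy--Schwarz give $|f(v)-f(u)|^2\le R(u)\sum_e|\Delta f(e)|^2$, the sum being over the edges $e$ on the path, with $\Delta f(e)$ the increment of $f$ across $e$. Summing over $v\in\mathcal{N}(u)$ and exchanging the order of summation rewrites the bound as $R(u)\sum_{e\in\mathcal{E}(\mathcal{T}(u))}m_e|\Delta f(e)|^2$, where $m_e$ counts the vertices $v\in\mathcal{N}(u)$ whose path to $u$ traverses $e$, i.e. the number of vertices in the subtree of $\mathcal{T}(u)$ cut off from the root $u$ when $e$ is deleted. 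Each such subtree lies inside one of the subtrees $\mathcal{T}_u(v)$ hanging directly off $u$, so $m_e\le K(u)$, whence
$$
\sum_{v\in\mathcal{N}(u)}|f(v)-f(u)|^2\le K(u)R(u)\sum_{e\in\mathcal{E}(\mathcal{T}(u))}|\Delta f(e)|^2\le Q_{\rm max}^2\sum_{e\in\mathcal{E}(\mathcal{T}(u))}|\Delta f(e)|^2 .
$$

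Finally I would sum over $u\in\mathcal{S}$. Since the local sets are pairwise disjoint, the subgraphs $\mathcal{G}_{\mathcal{N}(u)}$ share no edges, so the edge sets $\mathcal{E}(\mathcal{T}(u))$ are disjoint subsets of $\mathcal{E}$ and
$$
\sum_{u\in\mathcal{S}}\sum_{e\in\mathcal{E}(\mathcal{T}(u))}|\Delta f(e)|^2\le\sum_{(i,j)\in\mathcal{E}}|f(i)-f(j)|^2={\bf f}^{\rm T}{\bf L}{\bf f}.
$$
As ${\bf f}\in PW_{\omega}(\mathcal{G})$, expanding in the Laplacian eigenbasis yields ${\bf f}^{\rm T}{\bf L}{\bf f}=\sum_k\lambda_k|\hat{f}(k)|^2\le\omega\|{\bf f}\|^2$. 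Chaining the three displayed estimates gives $\|({\bf I}-{\bf G}){\bf f}\|^2\le\omega Q_{\rm max}^2\|{\bf f}\|^2$ for all ${\bf f}\in PW_{\omega}(\mathcal{G})$, and the hypothesis $\omega<1/Q_{\rm max}^2$ makes $\omega Q_{\rm max}^2<1$, so $({\bf I}-{\bf G})$ is a contraction on $PW_{\omega}(\mathcal{G})$ with factor $\sqrt{\omega}\,Q_{\rm max}$. I expect the only genuinely delicate point to be the combinatorial bookkeeping in the middle paragraph --- confirming $m_e\le K(u)$ for every edge of $\mathcal{T}(u)$, not merely for those incident to the root; the rest is Cauchy--Schwarz, orthogonality of the $\bm{\delta}_v$, and the standard quadratic-form bound for bandlimited signals.
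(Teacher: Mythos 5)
Your proof is correct and follows essentially the same route as the paper's: bound $\|({\bf I}-{\bf G}){\bf f}\|^2$ by the sum of squared local deviations, telescope each deviation along the shortest-path tree with Cauchy--Schwarz (picking up $R(u)$), bound the edge multiplicities by $K(u)$, and finish with ${\bf f}^{\rm T}{\bf L}{\bf f}\le\omega\|{\bf f}\|^2$. Your explicit verification that $m_e\le K(u)$ for edges not incident to the root (via containment of the cut-off subtree in some $\mathcal{T}_u(v)$) is in fact slightly more careful than the paper's terse assertion of the same fact.
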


\begin{proof}
The proof is postponed to \ref{proof1}.
\end{proof}

For a given sampling set, there may exist various divisions of local sets. We will see in the next section that different divisions may lead to different theoretical bounds in recovering bandlimited signals.

\subsection{Weighted Frame}\label{weightedframe}

Based on the definition of local set, we could prove that the weighted lowpass $\delta$-function set is a frame for $PW_{\omega}(\mathcal{G})$ and estimate its bounds.

\begin{lem}\label{lem0}
For a given sampling set $\mathcal{S}$ and associated local sets $\{\mathcal{N}(u)\}_{u\in\mathcal{S}}$ on a graph $\mathcal{G}(\mathcal{V},\mathcal{E})$, $\forall \omega < 1/Q_{\rm max}^2$, $\{\mathcal{P}_{\omega}(\bm{\delta}_u)\}_{u\in \mathcal{S}}$ is a frame for $PW_{\omega}(\mathcal{G})$ with bounds $(1-\gamma)^2/N_{\text{max}}$ and $1$, where $Q_{\text{max}}$ is defined in \eqref{eq:defineQmax} and
\begin{equation}\label{eq:definegamma}
	\gamma =Q_{\rm max}\sqrt{\omega}.
\end{equation}
\end{lem}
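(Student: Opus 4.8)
The plan is to verify the two frame inequalities for $\{\mathcal{P}_{\omega}(\bm{\delta}_u)\}_{u\in\mathcal{S}}$ separately, using Theorem \ref{thm3} to know it is already a frame once $\mathcal{S}$ is a uniqueness set, but here we want \emph{explicit} bounds. For the upper bound I would argue directly: for any ${\mathbf f}\in PW_{\omega}(\mathcal{G})$ we have $\langle{\mathbf f},\mathcal{P}_{\omega}(\bm{\delta}_u)\rangle=\langle\mathcal{P}_{\omega}{\mathbf f},\bm{\delta}_u\rangle=\langle{\mathbf f},\bm{\delta}_u\rangle=f(u)$, so $\sum_{u\in\mathcal{S}}|\langle{\mathbf f},\mathcal{P}_{\omega}(\bm{\delta}_u)\rangle|^2=\sum_{u\in\mathcal{S}}|f(u)|^2\le\sum_{v\in\mathcal{V}}|f(v)|^2=\|{\mathbf f}\|^2$, giving the upper bound $B=1$ with no conditions needed.

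The work is in the lower bound $A=(1-\gamma)^2/N_{\rm max}$. I would exploit the local propagation operator ${\bf G}$ of Definition \ref{limitedpropagation} and Lemma \ref{lemma1}, which says $\|{\bf I}-{\bf G}\|<1$ on $PW_{\omega}(\mathcal{G})$ when $\omega<1/Q_{\rm max}^2$; indeed the natural guess (and presumably what \ref{proof1} actually gives) is the sharper bound $\|({\bf I}-{\bf G}){\mathbf f}\|\le\gamma\|{\mathbf f}\|$ with $\gamma=Q_{\rm max}\sqrt{\omega}$. From this, ${\bf G}$ is invertible on $PW_{\omega}(\mathcal{G})$ and $\|{\bf G}{\mathbf f}\|\ge(1-\gamma)\|{\mathbf f}\|$. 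Now relate $\|{\bf G}{\mathbf f}\|$ to $\sum_{u\in\mathcal{S}}|f(u)|^2$: since $\mathcal{P}_{\omega}$ is a projection (hence norm-nonincreasing) and the sets $\mathcal{N}(u)$ are disjoint,
\[
\|{\bf G}{\mathbf f}\|=\Bigl\|\mathcal{P}_{\omega}\Bigl(\sum_{u\in\mathcal{S}}f(u)\bm{\delta}_{\mathcal{N}(u)}\Bigr)\Bigr\|\le\Bigl\|\sum_{u\in\mathcal{S}}f(u)\bm{\delta}_{\mathcal{N}(u)}\Bigr\|=\Bigl(\sum_{u\in\mathcal{S}}|f(u)|^2\,|\mathcal{N}(u)|\Bigr)^{1/2}\le\sqrt{N_{\rm max}}\Bigl(\sum_{u\in\mathcal{S}}|f(u)|^2\Bigr)^{1/2}.
\]
Combining with $\|{\bf G}{\mathbf f}\|\ge(1-\gamma)\|{\mathbf f}\|$ yields $(1-\gamma)\|{\mathbf f}\|\le\sqrt{N_{\rm max}}\bigl(\sum_{u\in\mathcal{S}}|f(u)|^2\bigr)^{1/2}$, i.e. $\sum_{u\in\mathcal{S}}|f(u)|^2\ge\frac{(1-\gamma)^2}{N_{\rm max}}\|{\mathbf f}\|^2$, which is exactly the claimed lower frame bound. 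Positivity of $A$ follows because $\gamma<1$ under the hypothesis $\omega<1/Q_{\rm max}^2$.

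The main obstacle I anticipate is confirming that Lemma \ref{lemma1} indeed supplies the quantitative estimate $\|({\bf I}-{\bf G}){\mathbf f}\|\le\gamma\|{\mathbf f}\|$ (and not merely some unquantified contraction factor), since the whole lower bound hinges on that constant being $\gamma=Q_{\rm max}\sqrt{\omega}$; if the lemma as proved in \ref{proof1} only gives $\|{\bf I}-{\bf G}\|<1$ abstractly, I would need to reach into that argument to extract the sharp constant. A secondary point to be careful about is the direction of the inequality when passing through $\mathcal{P}_{\omega}$ — here it is used favorably (it can only shrink the norm, and we need an upper bound on $\|{\bf G}{\mathbf f}\|$ in terms of the $\ell^2$ norm of samples), so no extra hypothesis is incurred beyond what is already assumed.
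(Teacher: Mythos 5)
Your proof is correct and follows essentially the same route as the paper: the upper bound $1$ comes from $\sum_{u\in\mathcal{S}}|f(u)|^2\le\|{\mathbf f}\|^2$, and the lower bound from combining the quantitative contraction $\|({\bf I}-{\bf G}){\mathbf f}\|\le\gamma\|{\mathbf f}\|$ (which the proof of Lemma \ref{lemma1} does indeed supply, resolving the obstacle you flagged) with the estimate $\|{\bf G}{\mathbf f}\|\le\sqrt{N_{\max}}\bigl(\sum_{u\in\mathcal{S}}|f(u)|^2\bigr)^{1/2}$. The only cosmetic difference is that the paper delegates this final combination to Proposition 2 of \cite{feichtinger_theory_1994}, whereas you write the two-line argument out explicitly; the ingredients and their use are identical.
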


\begin{proof}
The proof is postponed to \ref{proof3}.
\end{proof}


One may notice that Theorem 3.2 of \cite{pesenson_sampling_2010} also implies that $\{\mathcal{P}_{\omega}(\bm{\delta}_u)\}_{u\in \mathcal{S}}$ is a frame for $PW_{\omega}(\mathcal{G})$. However, the assumptions and approaches in this work are quite different from those in the above reference. Furthermore, base on the proposed local sets, we study the relation between sampling set and whole vertices, and clarify the frame bound exactly.

Beyond Lemma \ref{lem0}, we further explore the weighted lowpass $\delta$-functions is also a frame for $PW_{\omega}(\mathcal{G})$ by appropriate weights.

\begin{lem}\label{lemma3}
For a given sampling set $\mathcal{S}$ and associated local sets $\{\mathcal{N}(u)\}_{u\in\mathcal{S}}$ on a graph $\mathcal{G}(\mathcal{V},\mathcal{E})$, $\forall \omega < 1/Q_{\rm max}^2$, $\{\sqrt{|\mathcal{N}(u)|}\mathcal{P}_{\omega}(\bm{\delta}_u)\}_{u\in \mathcal{S}}$ is a frame for $PW_{\omega}(\mathcal{G})$ with bounds $\left(1-\gamma\right)^2$ and $\left(1+\gamma\right)^2$, where $Q_{\rm max}$ and $\gamma$ are defined in \eqref{eq:defineQmax} and \eqref{eq:definegamma}, respectively.
\end{lem}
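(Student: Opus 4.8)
The plan is to mimic the structure that must have been used for Lemma \ref{lem0}, but to track both the upper and lower bounds carefully since now we expect a symmetric-looking pair $(1-\gamma)^2$ and $(1+\gamma)^2$. First I would compute, for any ${\mathbf f}\in PW_{\omega}(\mathcal{G})$,
\[
\sum_{u\in\mathcal{S}}\bigl|\langle{\mathbf f},\sqrt{|\mathcal{N}(u)|}\,\mathcal{P}_{\omega}(\bm{\delta}_u)\rangle\bigr|^2
=\sum_{u\in\mathcal{S}}|\mathcal{N}(u)|\,|f(u)|^2,
\]
using that $\langle{\mathbf f},\mathcal{P}_{\omega}(\bm{\delta}_u)\rangle=\langle\mathcal{P}_{\omega}{\mathbf f},\bm{\delta}_u\rangle=\langle{\mathbf f},\bm{\delta}_u\rangle=f(u)$. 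So the task reduces to sandwiching $\sum_{u\in\mathcal{S}}|\mathcal{N}(u)|\,|f(u)|^2$ between $(1-\gamma)^2\|{\mathbf f}\|^2$ and $(1+\gamma)^2\|{\mathbf f}\|^2$.

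The key idea is to relate this quantity to $\|{\bf G}{\mathbf f}\|^2$ and then use Lemma \ref{lemma1}. Observe that before projection, the locally propagated signal $\sum_{u\in\mathcal{S}}f(u)\bm{\delta}_{\mathcal{N}(u)}$ has squared norm exactly $\sum_{u\in\mathcal{S}}|\mathcal{N}(u)|\,|f(u)|^2$ because the local sets are disjoint and cover $\mathcal{V}$. Since $\mathcal{P}_{\omega}$ is an orthogonal projection, $\|{\bf G}{\mathbf f}\|^2\le\sum_{u\in\mathcal{S}}|\mathcal{N}(u)|\,|f(u)|^2$, which is not directly the inequality I want. Instead I would decompose ${\bf G}{\mathbf f}={\mathbf f}+({\bf G}-{\bf I}){\mathbf f}$ (valid on $PW_{\omega}(\mathcal{G})$, where ${\mathbf f}$ is already bandlimited), and use the triangle inequality together with the contraction bound from Lemma \ref{lemma1}. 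The natural way to extract the pre-projection norm is to write, using that $\mathcal{P}_{\omega}$ is a projection,
\[
\sum_{u\in\mathcal{S}}|\mathcal{N}(u)|\,|f(u)|^2
=\Bigl\|\sum_{u\in\mathcal{S}}f(u)\bm{\delta}_{\mathcal{N}(u)}\Bigr\|^2
=\|{\bf G}{\mathbf f}\|^2+\Bigl\|({\bf I}-\mathcal{P}_{\omega})\sum_{u\in\mathcal{S}}f(u)\bm{\delta}_{\mathcal{N}(u)}\Bigr\|^2.
\]
The plan is to show the orthogonal-complement term is at most $\gamma^2\|{\mathbf f}\|^2$ — indeed it equals $\|({\bf I}-\mathcal{P}_{\omega})\sum f(u)(\bm{\delta}_{\mathcal{N}(u)}-\bm{\delta}_u)\|^2$ since $\mathcal{P}_{\omega}$ fixes the bandlimited pieces wait, that is not right either; $\bm{\delta}_u$ is not bandlimited — so I would instead bound it directly by the quantity that the proof of Lemma \ref{lemma1} already controls, namely the "propagation minus identity" operator restricted to the high-pass part. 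Concretely, $\|({\bf I}-\mathcal{P}_{\omega})\sum f(u)\bm{\delta}_{\mathcal{N}(u)}\|\le\|\sum f(u)(\bm{\delta}_{\mathcal{N}(u)}-\bm{\delta}_{u})\|$-type estimates combined with the per-local-set bound $\|\mathcal{P}_{\omega}(\bm{\delta}_{\mathcal{N}(u)}-\bm{\delta}_u)\|$ vs. the high-pass analogue, which is exactly where $K(u)R(u)$ and hence $Q_{\rm max}$ enter.

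Putting these together: from ${\bf G}{\mathbf f}={\mathbf f}-({\bf I}-{\bf G}){\mathbf f}$ and $\|{\bf I}-{\bf G}\|\le\gamma$ on $PW_{\omega}(\mathcal{G})$ (which is what Lemma \ref{lemma1} gives, with $\gamma=Q_{\rm max}\sqrt{\omega}<1$), we get
\[
(1-\gamma)\|{\mathbf f}\|\le\|{\bf G}{\mathbf f}\|\le(1+\gamma)\|{\mathbf f}\|.
\]
Combined with $\|{\bf G}{\mathbf f}\|^2\le\sum_{u\in\mathcal{S}}|\mathcal{N}(u)|\,|f(u)|^2$ this already yields the lower bound $(1-\gamma)^2\|{\mathbf f}\|^2\le\sum_{u\in\mathcal{S}}|\mathcal{N}(u)|\,|f(u)|^2$. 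For the upper bound I would control the discarded orthogonal-complement term by $\gamma^2\|{\mathbf f}\|^2$ using the same per-local-set estimates as in Lemma \ref{lemma1}'s proof, giving $\sum_{u\in\mathcal{S}}|\mathcal{N}(u)|\,|f(u)|^2\le\|{\bf G}{\mathbf f}\|^2+\gamma^2\|{\mathbf f}\|^2\le\bigl((1+\gamma)^2-?\bigr)$ — here I would have to be careful that the two contributions combine to exactly $(1+\gamma)^2$; the cleanest route may be to bound $\sum|\mathcal{N}(u)||f(u)|^2$ directly as $\|\sum f(u)\bm{\delta}_{\mathcal{N}(u)}\|^2\le(\|{\mathbf f}\|+\|\sum f(u)\bm{\delta}_{\mathcal{N}(u)}-{\mathbf f}\|)^2$ and then show $\|\sum f(u)\bm{\delta}_{\mathcal{N}(u)}-{\mathbf f}\|\le\gamma\|{\mathbf f}\|$, which is the natural "unprojected" analogue of Lemma \ref{lemma1} and should follow from the same Poincaré-type inequality bounding $\|\bm{\delta}_{\mathcal{N}(u)}-\bm{\delta}_u\|$ against a gradient/Laplacian term times $\sqrt{K(u)R(u)}$ over the bandlimited signal. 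The main obstacle I anticipate is precisely this last point: making the unprojected estimate $\|\sum_{u\in\mathcal{S}}f(u)\bm{\delta}_{\mathcal{N}(u)}-{\mathbf f}\|\le\gamma\|{\mathbf f}\|$ rigorous — that is, verifying that the quantity already controlled inside the proof of Lemma \ref{lemma1} (which bounds $\|({\bf I}-{\bf G}){\mathbf f}\|=\|\mathcal{P}_{\omega}(\sum f(u)\bm{\delta}_{\mathcal{N}(u)}-{\mathbf f})\|$) in fact also bounds the full norm before projection, or else redoing that chain of inequalities with the projection removed. Everything else is bookkeeping with the frame definition.
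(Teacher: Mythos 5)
Your proposal is correct and follows essentially the same route as the paper: the identity $\sum_{u\in\mathcal{S}}|\langle{\mathbf f},\sqrt{|\mathcal{N}(u)|}\mathcal{P}_{\omega}(\bm{\delta}_u)\rangle|^2=\sum_{u\in\mathcal{S}}|\mathcal{N}(u)|\,|f(u)|^2=\|\sum_{u\in\mathcal{S}}f(u)\bm{\delta}_{\mathcal{N}(u)}\|^2$, the lower bound from $(1-\gamma)\|{\mathbf f}\|\le\|{\bf G}{\mathbf f}\|\le\|\sum_{u\in\mathcal{S}}f(u)\bm{\delta}_{\mathcal{N}(u)}\|$, and the upper bound from the triangle inequality together with the unprojected estimate $\|{\mathbf f}-\sum_{u\in\mathcal{S}}f(u)\bm{\delta}_{\mathcal{N}(u)}\|\le\gamma\|{\mathbf f}\|$. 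The one obstacle you flag is already settled inside the paper's proof of Lemma \ref{lemma1}, whose very first inequality discards the projection $\mathcal{P}_{\omega}$ and then bounds the full unprojected norm by $\gamma\|{\mathbf f}\|$, so that estimate comes for free.
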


\begin{proof}
The proof is postponed to \ref{proof4}.
\end{proof}

Bandlimited graph signals can be iteratively reconstructed using a frame for $PW_{\omega}(\mathcal{G})$, but the frame bounds play critical roles on the convergence rate. By given appropriate weights to the elements in a frame, a new frame is obtained with a sharper bounds estimation, which may lead to a faster convergence. The related algorithms will be proposed in section \ref{sec4}.

\subsection{Local Set Frame}

To end up this section, we present a general theoretical result which may inspire further study on frame-theory-based graph signal processing.

\begin{pro}\label{pro3}
For a given sampling set $\mathcal{S}$ and associated local sets $\{\mathcal{N}(u)\}_{u\in\mathcal{S}}$ on a graph $\mathcal{G}(\mathcal{V},\mathcal{E})$, $\forall \omega < 1/Q_{\rm max}^2$, $\{\mathcal{P}_{\omega}(\bm{\delta}_{\mathcal{N}(u)})\}_{u\in \mathcal{S}}$ is a frame for $PW_{\omega}(\mathcal{G})$ with bounds $(1-\gamma)^2$ and $N_{\max}$, where $Q_{\rm max}$ and $\gamma$ are defined in \eqref{eq:defineQmax} and \eqref{eq:definegamma}, respectively.
\end{pro}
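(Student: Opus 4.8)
The plan is to estimate the frame quantity $\sum_{u\in\mathcal{S}}|\langle {\bf f}, \mathcal{P}_{\omega}(\bm{\delta}_{\mathcal{N}(u)})\rangle|^2$ from above and below for an arbitrary ${\bf f}\in PW_{\omega}(\mathcal{G})$. Since ${\bf f}$ is bandlimited, $\langle {\bf f}, \mathcal{P}_{\omega}(\bm{\delta}_{\mathcal{N}(u)})\rangle = \langle {\bf f}, \bm{\delta}_{\mathcal{N}(u)}\rangle = \sum_{v\in\mathcal{N}(u)} f(v)$, so the quantity to control is $\sum_{u\in\mathcal{S}}\bigl|\sum_{v\in\mathcal{N}(u)} f(v)\bigr|^2$. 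For the upper bound I would apply Cauchy--Schwarz inside each local set, $\bigl|\sum_{v\in\mathcal{N}(u)} f(v)\bigr|^2 \le |\mathcal{N}(u)|\sum_{v\in\mathcal{N}(u)}|f(v)|^2 \le N_{\max}\sum_{v\in\mathcal{N}(u)}|f(v)|^2$, and then sum over $u\in\mathcal{S}$; since the local sets partition $\mathcal{V}$, the sum telescopes to $N_{\max}\|{\bf f}\|^2$, giving the upper bound $B=N_{\max}$.

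For the lower bound, the natural route is to connect $\langle {\bf f}, \bm{\delta}_{\mathcal{N}(u)}\rangle$ back to the pointwise values $f(u)$ at the sampled vertices and then invoke Lemma \ref{lem0} (or its proof technique). Concretely, I would write $\sum_{v\in\mathcal{N}(u)} f(v) = |\mathcal{N}(u)| f(u) + \sum_{v\in\mathcal{N}(u)}(f(v)-f(u))$, so that $\langle {\bf f},\mathcal{P}_{\omega}(\bm{\delta}_{\mathcal{N}(u)})\rangle = \langle {\bf f}, |\mathcal{N}(u)|\bm{\delta}_u\rangle + \langle {\bf f}, \mathcal{P}_{\omega}(\bm{\delta}_{\mathcal{N}(u)} - |\mathcal{N}(u)|\bm{\delta}_u)\rangle$. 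The first term, after summing, relates to the frame $\{\sqrt{|\mathcal{N}(u)|}\,\mathcal{P}_{\omega}(\bm{\delta}_u)\}$ of Lemma \ref{lemma3}, whose lower bound is $(1-\gamma)^2$. The second term is an error term: I would bound $\|\mathcal{P}_{\omega}(\bm{\delta}_{\mathcal{N}(u)} - |\mathcal{N}(u)|\bm{\delta}_u)\|$ using the same local-propagation/smoothness estimates that underlie Lemma \ref{lemma1}, i.e. the bound controlled by $K(u)$ and $R(u)$ and hence by $\gamma = Q_{\max}\sqrt{\omega}$. Assembling via a reverse triangle inequality of the form $\|{\bf S}{\bf f}\| \ge \|{\bf S}_1{\bf f}\| - \|(\text{error})\,{\bf f}\|$ on the associated operators, the lower frame bound comes out as $(1-\gamma)^2$ after the $\gamma$-terms are collected; the condition $\omega < 1/Q_{\max}^2$ guarantees $\gamma<1$ so this is positive.

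Alternatively — and this may be cleaner — one can observe that $\mathcal{P}_{\omega}(\bm{\delta}_{\mathcal{N}(u)}) = {\bf G}^{*}$-type images or relate the analysis operator here directly to the one in Lemma \ref{lem0} by a diagonal reweighting, since $\{\mathcal{P}_{\omega}(\bm{\delta}_{\mathcal{N}(u)})\}$ and $\{\sqrt{|\mathcal{N}(u)|}\mathcal{P}_{\omega}(\bm{\delta}_u)\}$ differ by the replacement of $|\mathcal{N}(u)|\bm{\delta}_u$ by $\bm{\delta}_{\mathcal{N}(u)}$, whose projected difference is small. The main obstacle I anticipate is making the error term rigorous: one must show that $\|\mathcal{P}_{\omega}(\bm{\delta}_{\mathcal{N}(u)} - |\mathcal{N}(u)|\bm{\delta}_u)\|$ (or the corresponding cross term in the quadratic form) is bounded by something like $|\mathcal{N}(u)|\,Q_{\max}\sqrt{\omega}$, so that it matches the $\gamma$ normalization used in the other lemmas; this is essentially a repetition of the spectral/path-counting argument in \ref{proof1} and \ref{proof3}, applied to each local set, and the bookkeeping of which $N_{\max}$, $K(u)$, $R(u)$ factors land where is the delicate part. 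Once that estimate is in hand, the lower bound $(1-\gamma)^2$ follows by the same algebra as in Lemma \ref{lemma3}, and the upper bound $N_{\max}$ is immediate from Cauchy--Schwarz as above.
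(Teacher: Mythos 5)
Your upper bound is correct: for ${\bf f}\in PW_{\omega}(\mathcal{G})$ one indeed has $\langle {\bf f},\mathcal{P}_{\omega}(\bm{\delta}_{\mathcal{N}(u)})\rangle=\sum_{v\in\mathcal{N}(u)}f(v)$, and Cauchy--Schwarz inside each local set plus the disjointness of the partition gives $N_{\max}\|{\bf f}\|^2$; this is equivalent (by duality) to the paper's bound $\|\sum_u g_u\mathcal{P}_{\omega}(\bm{\delta}_{\mathcal{N}(u)})\|^2\le N_{\max}\sum_u|g_u|^2$ on the synthesis operator.

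The lower bound is where there is a genuine gap. Your decomposition $\sum_{v\in\mathcal{N}(u)}f(v)=|\mathcal{N}(u)|f(u)+e_u$ with $e_u=\sum_{v\in\mathcal{N}(u)}(f(v)-f(u))$, followed by a reverse triangle inequality in $\ell^2(\mathcal{S})$, does not produce $(1-\gamma)^2$. Quantitatively: Cauchy--Schwarz gives $|e_u|^2\le|\mathcal{N}(u)|\sum_{v\in\mathcal{N}(u)}|f(v)-f(u)|^2$, and the path-counting estimates of \ref{proof1} then yield only $\sum_u|e_u|^2\le N_{\max}\,\gamma^2\|{\bf f}\|^2$ — the error term is $\sqrt{N_{\max}}\,\gamma\|{\bf f}\|$, not $\gamma\|{\bf f}\|$. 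The resulting bound $\bigl(\sqrt{\sum_u|\mathcal{N}(u)|^2|f(u)|^2}-\sqrt{N_{\max}}\,\gamma\|{\bf f}\|\bigr)^2$ is strictly weaker than the claim and becomes vacuous once $\gamma\ge 1/(1+\sqrt{N_{\max}})$, which is well inside the regime $\omega<1/Q_{\max}^2$. The $\sqrt{N_{\max}}$ loss is intrinsic to this route, so "the same algebra as in Lemma \ref{lemma3}" will not rescue it: Lemma \ref{lemma3}'s proof goes through the invertibility of ${\bf G}$, not through a term-by-term triangle inequality.

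The paper avoids this loss by never estimating $\sum_u|\langle {\bf f},\mathcal{P}_{\omega}(\bm{\delta}_{\mathcal{N}(u)})\rangle|^2$ directly. It writes ${\bf G}{\bf f}=\sum_{u}\langle {\bf f},\mathcal{P}_{\omega}(\bm{\delta}_u)\rangle\mathcal{P}_{\omega}(\bm{\delta}_{\mathcal{N}(u)})$ and feeds the three facts $\|{\bf f}-{\bf G}{\bf f}\|\le\gamma\|{\bf f}\|$ (Lemma \ref{lemma1}), $\sum_u|\langle{\bf f},\mathcal{P}_{\omega}(\bm{\delta}_u)\rangle|^2\le\|{\bf f}\|^2$, and the synthesis bound with constant $N_{\max}$ into Proposition 2 of \cite{feichtinger_theory_1994}. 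Unwound, that abstract result is a duality argument: from $\|{\bf f}\|^2\le\langle{\bf G}{\bf f},{\bf f}\rangle+\gamma\|{\bf f}\|^2$ and Cauchy--Schwarz on the cross term $\sum_u\langle{\bf f},\mathcal{P}_{\omega}(\bm{\delta}_u)\rangle\langle\mathcal{P}_{\omega}(\bm{\delta}_{\mathcal{N}(u)}),{\bf f}\rangle$ one gets $(1-\gamma)\|{\bf f}\|^2\le\|{\bf f}\|\bigl(\sum_u|\langle{\bf f},\mathcal{P}_{\omega}(\bm{\delta}_{\mathcal{N}(u)})\rangle|^2\bigr)^{1/2}$, hence the bound $(1-\gamma)^2$ with no $N_{\max}$ penalty, because the \emph{other} family $\{\mathcal{P}_{\omega}(\bm{\delta}_u)\}$ has Bessel constant $1$. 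Your second, "alternative" paragraph gestures toward relating the two analysis operators by reweighting, but what is actually needed is this pairing of the two families against each other (equivalently, $\|{\bf G}^{*}{\bf f}\|\ge(1-\gamma)\|{\bf f}\|$ together with $\|\sum_u c_u\mathcal{P}_{\omega}(\bm{\delta}_u)\|^2\le\sum_u|c_u|^2$), not a bound on each family separately.
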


\begin{proof}
The proof is postponed to \ref{proof3}.
\end{proof}

In fact local propagation is not a standard frame operator, because two signal sets $\{\mathcal{P}_{\omega}(\bm{\delta}_u)\}_{u\in \mathcal{S}}$ and $\{\mathcal{P}_{\omega}(\bm{\delta}_{\mathcal{N}(u)})\}_{u\in \mathcal{S}}$ are involved. However, under the same condition with the contraction of operator $({\bf I-G})$, both sets can be proved to be frame, and either of them can be used to reconstruct the original signal by the corresponding frame operator. All frames discussed in this section are listed in Table \ref{tab:frames}.

\begin{table}[t]
\caption{The frames in space $PW_{\omega}(\mathcal{G}), \forall \omega < 1/Q_{\max}^2$, and their bounds.}\label{tab:frames}
\begin{center}
\begin{tabular}{ccc}
\toprule[1pt]
Frame & Lower bound & Upper bound\\ \hline
$\{\mathcal{P}_{\omega}({\bm \delta}_u)\}_{u\in\mathcal{S}}$ &
$(1-\gamma)^2/N_{\max}$ & 1\\
$\{\sqrt{|\mathcal{N}(u)|}\mathcal{P}_{\omega}({\bm \delta}_u)\}_{u\in\mathcal{S}}$ & $(1-\gamma)^2$ & $(1+\gamma)^2$\\
$\{\mathcal{P}_{\omega}({\bm \delta}_{\mathcal{N}(u)})\}_{u\in\mathcal{S}}$ & $(1-\gamma)^2$ & $N_{\max}$\\
\bottomrule[1pt]
\end{tabular}
\end{center}
\end{table}

\section{Iterative Reconstruction Algorithms}
\label{sec4}

In this section, ILSR is represented in the frame-based framework. Then two novel methods IWR and IPR are proposed with theoretical analysis of convergence.
Discussions on the three methods are also included in this section.

\subsection{Iterative Least Square Reconstruction}

In this subsection, we will represent ILSR, which is proposed in \cite{narang_localized_2013} in the form of (\ref{ilsr}), into frame-based framework.

According to Definition \ref{def:frameoperator} and Lemma \ref{lem0}, frame operator associated with frame $\{\mathcal{P}_{\omega}(\bm{\delta}_u)\}_{u\in \mathcal{S}}$ is
\begin{equation}
{\bf Sf}=\sum_{u\in\mathcal{S}}\langle {\bf f}, \mathcal{P}_{\omega}(\bm{\delta}_u)\rangle \mathcal{P}_{\omega}(\bm{\delta}_u).\label{eq:frameonuniqueset}
\end{equation}
For ${\bf f}\in PW_{\omega}(\mathcal{G})$, one has $\mathcal{P}_{\omega}({\bf f})={\bf f}$ and yields
$$
\langle {\bf f}, \mathcal{P}_{\omega}(\bm{\delta}_u)\rangle=\langle \mathcal{P}_{\omega}({\bf f}),\bm{\delta}_u\rangle=\langle {\bf f},\bm{\delta}_u\rangle=f(u).
$$
Consequently, frame operatior \eqref{eq:frameonuniqueset} is reduced to
\begin{equation}
{\bf Sf}=\sum_{u\in\mathcal{S}}f(u) \mathcal{P}_{\omega}(\bm{\delta}_u).\label{eq:frameonuniquesetreduced}
\end{equation}
Utilizing \eqref{eq:frameonuniquesetreduced} in \eqref{iteration}, one may read that the original signal, whose unsampled values are never needed in the iterative reconstruction, could be exactly recovered from its entries on a uniqueness set. The reformulated ILSR method is displayed in Table \ref{algILSR}.

\begin{table}[t]
\renewcommand{\arraystretch}{1.2}
\caption{Iterative Least Square Reconstruction.}\label{algILSR}
\begin{center}
\begin{tabular}{l}
\toprule[1pt]
{\bf Input:} \hspace{0.5em} Graph $\mathcal{G}$, cutoff frequency $\omega$, sampling set $\mathcal{S}$,\\
\hspace{3.5em} sampled data $\{f(u)\}_{u\in\mathcal{S}}$;\\
{\bf Output:} \hspace{0.5em} Interpolated signal ${\bf f}^{(k)}$;\\
\hline
{\bf Initialization:}\\
\hspace{1.3em} $\displaystyle {\bf f}^{(0)}=\mathcal{P}_{\omega}\left(\sum_{u\in \mathcal{S}}f(u)\bm{\delta}_{u}\right);$\\
{\bf Loop:}\\
\hspace{1.3em} $\displaystyle {\mathbf f}^{(k+1)}={\mathbf f}^{(k)}+\mathcal{P}_{\omega}\left(\sum_{u\in \mathcal{S}}(f(u)-f^{(k)}(u))\bm{\delta}_{u}\right)$;\\
{\bf Until:}\hspace{0.5em} The stop condition is satisfied.\\
\bottomrule[1pt]
\end{tabular}
\end{center}
\end{table}

\subsection{Iterative Weighting Reconstruction}

Using the weighted frame, an algorithm named iterative weighting reconstruction (IWR) is proposed in Proposition \ref{pro2} and its convergence is proved.

\begin{pro}\label{pro2}
For a given sampling set $\mathcal S$ and associated local sets $\{\mathcal{N}(u)\}_{u\in\mathcal{S}}$ on a graph $\mathcal{G}(\mathcal{V},\mathcal{E})$, $\forall {\mathbf f}\in PW_{\omega}(\mathcal{G})$, where $\omega<1/Q_{\rm max}^2$, ${\bf f}$ can be reconstructed by the sampled data $\{f(u)\}_{u\in \mathcal{S}}$ through the IWR method in Table \ref{algIWR}, with the error bound satisfying
$$
\|{\mathbf f}^{(k)}-{\mathbf f}\|\le \left(\frac{2\gamma}{1+\gamma^2}\right)^{k}\|{\mathbf f}^{(0)}-{\mathbf f}\|,
$$
where $Q_{\rm max}$ and $\gamma$ are defined in \eqref{eq:defineQmax} and \eqref{eq:definegamma}, respectively.
\end{pro}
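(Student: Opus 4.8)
The plan is to recognize the IWR iteration in Table~\ref{algIWR} as the generic frame-based recursion \eqref{iteration} applied to the weighted frame $\{\sqrt{|\mathcal{N}(u)|}\,\mathcal{P}_{\omega}(\bm{\delta}_u)\}_{u\in\mathcal{S}}$ from Lemma~\ref{lemma3}, with the step-size $\mu$ set to the optimal value $2/(A+B)$. First I would write down the frame operator $\mathbf S_{\mathrm w}$ associated with this weighted frame, namely $\mathbf S_{\mathrm w}\mathbf f=\sum_{u\in\mathcal{S}}|\mathcal{N}(u)|\,\langle\mathbf f,\mathcal{P}_{\omega}(\bm{\delta}_u)\rangle\,\mathcal{P}_{\omega}(\bm{\delta}_u)$, and use the fact that for $\mathbf f\in PW_{\omega}(\mathcal{G})$ one has $\langle\mathbf f,\mathcal{P}_{\omega}(\bm{\delta}_u)\rangle=f(u)$ (exactly the reduction already carried out for ILSR in \eqref{eq:frameonuniquesetreduced}), so that $\mathbf S_{\mathrm w}\mathbf f=\sum_{u\in\mathcal{S}}|\mathcal{N}(u)|\,f(u)\,\mathcal{P}_{\omega}(\bm{\delta}_u)$ depends only on the sampled data. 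This is what makes the recursion a genuine \emph{reconstruction}: the unknown entries of $\mathbf f$ never appear.

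Next I would verify that the loop step in Table~\ref{algIWR} is precisely $\mathbf f^{(k+1)}=\mathbf f^{(k)}+\mu\,\mathbf S_{\mathrm w}(\mathbf f-\mathbf f^{(k)})$ with $\mu=2/(A+B)$, where by Lemma~\ref{lemma3} the frame bounds are $A=(1-\gamma)^2$ and $B=(1+\gamma)^2$. Since $\mathbf f-\mathbf f^{(k)}\in PW_{\omega}(\mathcal{G})$ when both iterates lie in $PW_{\omega}(\mathcal{G})$ (guaranteed by applying $\mathcal{P}_{\omega}$ throughout and by the initialization), the data-driven form $\mathbf S_{\mathrm w}(\mathbf f-\mathbf f^{(k)})=\sum_{u\in\mathcal{S}}|\mathcal{N}(u)|\,(f(u)-f^{(k)}(u))\,\mathcal{P}_{\omega}(\bm{\delta}_u)$ matches the algorithm. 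Then the standard error-bound argument recalled in the Preliminaries gives
$$
\|\mathbf f^{(k)}-\mathbf f\|\le\|\mathbf I-\mu\mathbf S_{\mathrm w}\|^{k}\,\|\mathbf f^{(0)}-\mathbf f\|,
$$
and with $\mu=2/(A+B)$ the operator norm obeys $\|\mathbf I-\mu\mathbf S_{\mathrm w}\|\le (B-A)/(B+A)$. Substituting $A=(1-\gamma)^2$, $B=(1+\gamma)^2$ yields $(B-A)/(B+A)=4\gamma/(2+2\gamma^2)=2\gamma/(1+\gamma^2)$, which is the claimed contraction factor. Finally, $\omega<1/Q_{\mathrm{max}}^2$ forces $\gamma=Q_{\mathrm{max}}\sqrt{\omega}<1$, so $2\gamma/(1+\gamma^2)<1$ and the iteration converges.

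The main obstacle, and the only step requiring genuine care, is the bound $\|\mathbf I-\mu\mathbf S_{\mathrm w}\|\le(B-A)/(B+A)$: one must argue that the relevant operator norm is controlled by the extreme eigenvalues of $\mathbf S_{\mathrm w}$ restricted to $PW_{\omega}(\mathcal{G})$. This follows because $\mathbf S_{\mathrm w}$ is self-adjoint and positive on the Hilbert space $PW_{\omega}(\mathcal{G})$ with $A\mathbf I\preceq\mathbf S_{\mathrm w}\preceq B\mathbf I$ there, so the spectrum of $\mathbf I-\mu\mathbf S_{\mathrm w}$ lies in $[1-\mu B,\,1-\mu A]$, and the choice $\mu=2/(A+B)$ makes this interval symmetric about $0$ with half-width $(B-A)/(B+A)$; one should also note that all iterates and the error vector stay inside $PW_{\omega}(\mathcal{G})$, which is where Lemma~\ref{lemma3} applies. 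Everything else is bookkeeping: identifying the algorithm with \eqref{iteration} and simplifying the arithmetic of the contraction factor.
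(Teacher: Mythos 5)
Your proposal is correct and follows essentially the same route as the paper's proof: identify the IWR loop with the generic frame recursion \eqref{iteration} for the weighted frame of Lemma \ref{lemma3}, take $\mu=2/(A+B)=1/(1+\gamma^2)$ with $A=(1-\gamma)^2$, $B=(1+\gamma)^2$, and read off the contraction factor $(B-A)/(B+A)=2\gamma/(1+\gamma^2)$. The only difference is cosmetic: you spell out the spectral argument for $\|\mathbf{I}-\mu\mathbf{S}_{\mathrm{w}}\|\le(B-A)/(B+A)$, which the paper simply cites from \cite{Christensen_an_2002}.
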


\begin{table}[t]
\renewcommand{\arraystretch}{1.2}
\caption{Iterative Weighting Reconstruction.}\label{algIWR}
\begin{center}
\begin{tabular}{l}
\toprule[1pt]
{\bf Input:} \hspace{0.5em} Graph $\mathcal{G}$, cutoff frequency $\omega$, sampling set $\mathcal{S}$,\\
\hspace{3.5em} neighbor sets $\{\mathcal{N}(u)\}_{u\in\mathcal{S}}$, sampled data $\{f(u)\}_{u\in\mathcal{S}}$;\\
{\bf Output:} \hspace{0.5em} Interpolated signal ${\bf f}^{(k)}$;\\
\hline
{\bf Initialization:}\\
\hspace{1.3em} $\displaystyle {\mathbf f}^{(0)}=\frac{1}{1+\gamma^2}\mathcal{P}_{\omega}\left(\sum_{u\in \mathcal{S}}|\mathcal{N}(u)|f(u)\bm{\delta}_{u}\right);$\\
{\bf Loop:}\\
\hspace{1.3em} $\displaystyle {\mathbf f}^{(k+1)}={\mathbf f}^{(k)}+\frac{1}{1+\gamma^2}\mathcal{P}_{\omega}\left(\sum_{u\in \mathcal{S}}|\mathcal{N}(u)|(f(u)-f^{(k)}(u))\bm{\delta}_{u}\right)$;\\
{\bf Until:}\hspace{0.5em} The stop condition is satisfied.\\
\bottomrule[1pt]
\end{tabular}
\end{center}
\end{table}

\begin{proof}
The proof is postponed to \ref{proof5}.
\end{proof}

The idea of IWR is to attach different weights to sampled vertices.
The weights for vertex $u$ is larger if its local set $\mathcal{N}(u)$ has more vertices, in other words, the vertex $u$ is more isolated or
the region around $u$ has a lower sampling density.
On the contrary, if the sampled vertices in a region are very dense, less importance is allocated to them.

\begin{cor}\label{pro4}
In Lemma \ref{lemma3} and Proposition \ref{pro2}, $Q_{\text{max}}$ can be replaced by $\tilde{Q}_{\text{max}}$, which is defined as
$$
\tilde{Q}_{\text{max}}=\max_{u\in \mathcal{S}}\sqrt{\tilde{K}(u)R(u)}.
$$
\end{cor}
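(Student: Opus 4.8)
The plan is to derive the corollary directly from the elementary inequality $K(u)\le\tilde K(u)$ already recorded in \eqref{ku}, namely $K(u)\le|\mathcal{N}(u)|-d_{\mathcal{N}(u)}(u)=\tilde K(u)$. Taking maxima over $u\in\mathcal{S}$ of $\sqrt{K(u)R(u)}$ and $\sqrt{\tilde K(u)R(u)}$ gives $Q_{\max}\le\tilde Q_{\max}$, and hence $\gamma=Q_{\max}\sqrt{\omega}\le\tilde Q_{\max}\sqrt{\omega}=:\tilde\gamma$. First I would fix $\omega<1/\tilde Q_{\max}^2$; since $\tilde Q_{\max}\ge Q_{\max}$ this entails $\omega<1/Q_{\max}^2$, so the hypotheses of Lemma \ref{lemma3} and Proposition \ref{pro2} are in force, and moreover $\tilde\gamma\in(0,1)$ (the case $Q_{\max}=0$ being trivial).

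The second step is a monotonicity argument. For Lemma \ref{lemma3}: on $(0,1)$ the map $x\mapsto(1-x)^2$ is decreasing and $x\mapsto(1+x)^2$ is increasing, so $(1-\tilde\gamma)^2\le(1-\gamma)^2$ and $(1+\tilde\gamma)^2\ge(1+\gamma)^2$; thus the frame $\{\sqrt{|\mathcal{N}(u)|}\,\mathcal{P}_{\omega}(\bm{\delta}_u)\}_{u\in\mathcal{S}}$, which by Lemma \ref{lemma3} has bounds $(1-\gamma)^2$ and $(1+\gamma)^2$, a fortiori also satisfies the looser bounds $(1-\tilde\gamma)^2$ and $(1+\tilde\gamma)^2$. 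For Proposition \ref{pro2}: since $\frac{d}{dx}\frac{2x}{1+x^2}=\frac{2(1-x^2)}{(1+x^2)^2}>0$ on $(0,1)$, we get $\frac{2\gamma}{1+\gamma^2}\le\frac{2\tilde\gamma}{1+\tilde\gamma^2}$, so the error bound of Proposition \ref{pro2} remains valid after replacing $\gamma$ by $\tilde\gamma$. One should also note that the IWR iteration uses the step $1/(1+\gamma^2)$; replacing it by $1/(1+\tilde\gamma^2)$ just amounts to running IWR against the looser frame-bound estimate, for which the argument in the proof of Proposition \ref{pro2}, carried out with $\tilde\gamma$ in place of $\gamma$ throughout, applies line for line.

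Equivalently, and perhaps more cleanly, I would re-inspect the proofs of Lemma \ref{lemma1}, Lemma \ref{lemma3}, and Proposition \ref{pro2}: there $K(u)$ enters only through an upper estimate (on the norm of a local-propagation–type operator), so substituting any valid upper bound for $K(u)$—in particular $\tilde K(u)$—changes nothing. I do not expect a real obstacle here; the only care needed is the bookkeeping to keep $\tilde\gamma<1$ so that the monotonicity ranges and the denominators $1+\tilde\gamma^2$ are legitimate, and to confirm that $K(u)$ is used upstream only via an upper bound so the replacement is sound.
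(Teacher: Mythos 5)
Your proposal is correct and follows essentially the same route as the paper, which justifies the corollary only by the one-line remark that \eqref{ku} gives $\tilde K(u)\ge K(u)$ and hence $\tilde Q_{\max}\ge Q_{\max}$, so that every condition and bound stated in terms of $Q_{\max}$ holds a fortiori with $\tilde Q_{\max}$. Your extra care in checking the monotonicity of $(1\pm x)^2$ and $2x/(1+x^2)$ on $(0,1)$, and in noting that the IWR step size becomes $1/(1+\tilde\gamma^2)$ so the proof of Proposition \ref{pro2} must be rerun with the looser frame bounds $(1-\tilde\gamma)^2$ and $(1+\tilde\gamma)^2$, fills in details the paper leaves implicit.
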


According to (\ref{ku}), for any $u\in \mathcal{S}$ we have $\tilde{K}(u)\ge K(u)$, and then $\tilde{Q}_{\text{max}}\ge Q_{\text{max}}$.
In fact, $K(u)$  is not easy to obtain for each given subgraph $\mathcal{G}_{\mathcal{N}(u)}$.
However, $\tilde{K}(u)$ is convenient to get and $\tilde{Q}_{\text{max}}$
is a practical choice, even though the bound is not as accurate.

Proposition \ref{pro2} is a natural generalization of Theorem \ref{thm:ILSR} after ILSR is rebuilt on frame theory and
$\{\sqrt{|\mathcal{N}(u)|}\mathcal{P}_{\omega}(\bm{\delta}_{u})\}_{u\in \mathcal{S}}$ is proved to be a frame in Lemma \ref{lemma3}.

\subsection{Iterative Propagating Reconstruction}
Iterative propagating reconstruction (IPR) is proposed as the result of the contraction of the local propagation operator.

\begin{pro}\label{cor1}
For a given sampling set $\mathcal S$ and associated local sets $\{\mathcal{N}(u)\}_{u\in\mathcal{S}}$ on a graph $\mathcal{G}(\mathcal{V},\mathcal{E})$, $\forall {\mathbf f}\in PW_{\omega}(\mathcal{G})$, where $\omega<1/Q_{\rm max}^2$, ${\mathbf f}$ can always be reconstructed by its samples $\{f(u)\}_{u\in \mathcal{S}}$ through the IPR method in Table \ref{algIPR},
with the error bound satisfying
$$
\|{\mathbf f}^{(k)}-{\mathbf f}\|\le \gamma^{k}\|{\mathbf f}^{(0)}-{\bf f}\|,
$$
where $Q_{\rm max}$ and $\gamma$ are defined in \eqref{eq:defineQmax} and \eqref{eq:definegamma}, respectively.
\end{pro}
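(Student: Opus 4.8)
The plan is to identify the IPR recursion of Table~\ref{algIPR} as a Richardson-type iteration governed by the local propagation operator $\mathbf{G}$ of Definition~\ref{limitedpropagation}, and then to feed in the contraction estimate of Lemma~\ref{lemma1}. First I would observe that every iterate stays bandlimited: $\mathbf{f}^{(0)}$ is produced by $\mathcal{P}_\omega$ and each loop increment is also produced by $\mathcal{P}_\omega$, so by induction $\mathbf{f}^{(k)}\in PW_\omega(\mathcal{G})$ for all $k$; since $\mathbf{f}\in PW_\omega(\mathcal{G})$ as well and $PW_\omega(\mathcal{G})$ is a subspace, $\mathbf{f}-\mathbf{f}^{(k)}\in PW_\omega(\mathcal{G})$. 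Next, recalling that IPR replaces the single-vertex $\delta$-functions of ILSR by the local-set $\delta$-functions $\bm{\delta}_{\mathcal{N}(u)}$, its update
$$
\mathbf{f}^{(k+1)}=\mathbf{f}^{(k)}+\mathcal{P}_\omega\!\left(\sum_{u\in\mathcal{S}}\bigl(f(u)-f^{(k)}(u)\bigr)\bm{\delta}_{\mathcal{N}(u)}\right)
$$
is, by form~\eqref{eq:limitedpropagation2} of $\mathbf{G}$, exactly $\mathbf{f}^{(k+1)}=\mathbf{f}^{(k)}+\mathbf{G}\bigl(\mathbf{f}-\mathbf{f}^{(k)}\bigr)$, and crucially the right-hand side involves $\mathbf{f}$ only through the samples $\{f(u)\}_{u\in\mathcal{S}}$, which is what makes the scheme a genuine reconstruction rather than a recursion of the type~\eqref{iteration}.

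Then I would pass to the error $\mathbf{e}^{(k)}:=\mathbf{f}^{(k)}-\mathbf{f}$. Subtracting $\mathbf{f}$ from the rewritten update gives $\mathbf{e}^{(k+1)}=\mathbf{e}^{(k)}-\mathbf{G}\mathbf{e}^{(k)}=(\mathbf{I}-\mathbf{G})\mathbf{e}^{(k)}$, hence $\mathbf{e}^{(k)}=(\mathbf{I}-\mathbf{G})^{k}\mathbf{e}^{(0)}$. Because $\omega<1/Q_{\max}^2$, the operator-norm bound obtained in the proof of Lemma~\ref{lemma1} (see~\ref{proof1}) yields $\|(\mathbf{I}-\mathbf{G})\mathbf{g}\|\le\gamma\|\mathbf{g}\|$ for every $\mathbf{g}\in PW_\omega(\mathcal{G})$, with $\gamma=Q_{\max}\sqrt{\omega}<1$. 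Applying this $k$ times to $\mathbf{e}^{(0)}\in PW_\omega(\mathcal{G})$ gives $\|\mathbf{f}^{(k)}-\mathbf{f}\|\le\gamma^{k}\|\mathbf{f}^{(0)}-\mathbf{f}\|$, and since $\gamma<1$ the right-hand side vanishes as $k\to\infty$, so $\mathbf{f}^{(k)}\to\mathbf{f}$, which is the assertion.

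The induction on bandlimitedness and the algebraic rewriting of the recursion are routine. The one place that warrants care is making explicit that the increment $\mathbf{G}(\mathbf{f}-\mathbf{f}^{(k)})$ is computable from the sampled residuals alone via~\eqref{eq:limitedpropagation2}, and that all the vectors in play remain inside $PW_\omega(\mathcal{G})$ so that Lemma~\ref{lemma1} is applicable. The genuinely substantive ingredient --- the contraction estimate, whose contraction factor $\gamma$ is controlled by the local quantities $K(u)$ and $R(u)$ through $Q_{\max}$ --- has already been established in Lemma~\ref{lemma1}; consequently this proposition is essentially a corollary of that lemma, and I do not expect any new hard estimate to be needed here.
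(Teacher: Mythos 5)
Your proposal is correct and follows essentially the same route as the paper's own proof: rewrite the IPR update as ${\bf f}^{(k+1)}={\bf f}^{(k)}+{\bf G}({\bf f}-{\bf f}^{(k)})$, note that all iterates and hence the errors remain in $PW_{\omega}(\mathcal{G})$, and apply the contraction bound $\|({\bf I}-{\bf G}){\bf g}\|\le\gamma\|{\bf g}\|$ from Lemma \ref{lemma1} iteratively. Your additional remark that the increment depends on ${\bf f}$ only through its samples is a worthwhile explicit observation that the paper leaves implicit.
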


\begin{table}[t]
\renewcommand{\arraystretch}{1.2}
\caption{Iterative Propagating Reconstruction.}\label{algIPR}
\begin{center}
\begin{tabular}{l}
\toprule[1pt]
{\bf Input:} \hspace{0.5em} Graph $\mathcal{G}$, cutoff frequency $\omega$, sampling set $\mathcal{S}$,\\
\hspace{3.5em} neighbor sets $\{\mathcal{N}(u)\}_{u\in\mathcal{S}}$, sampled data $\{f(u)\}_{u\in\mathcal{S}}$;\\
{\bf Output:} \hspace{0.5em} Interpolated signal ${\bf f}^{(k)}$;\\
\hline
{\bf Initialization:}\\
\hspace{1.3em} $\displaystyle {\mathbf f}^{(0)}=\mathcal{P}_{\omega}\left(\sum_{u\in \mathcal{S}}f(u)\bm{\delta}_{\mathcal{N}(u)}\right);$\\
{\bf Loop:}\\
\hspace{1.3em} $\displaystyle {\mathbf f}^{(k+1)}={\mathbf f}^{(k)}+\mathcal{P}_{\omega}\left(\sum_{u\in \mathcal{S}}(f(u)-f^{(k)}(u))\bm{\delta}_{\mathcal{N}(u)}\right)$;\\
{\bf Until:}\hspace{0.5em} The stop condition is satisfied.\\
\bottomrule[1pt]
\end{tabular}
\end{center}
\end{table}

\begin{proof}
The proof is postponed to \ref{proof2}.
\end{proof}

Other than basic frame operators in ILSR and IWR, IPR is based on the contraction of the local propagation operator, in which two frames are involved.
Strictly speaking, IPR is not a frame-based method.
IPR is closed related to local sets, which is one of the main contributions of this work.

\begin{rem}
Similar to Proposition \ref{pro2}, $Q_{\text{max}}$ can also be replaced by $\tilde{Q}_{\text{max}}$ in Lemma \ref{lemma1} and Proposition \ref{cor1}, which is more practical to obtain.
\end{rem}

Since $\frac{2\gamma}{1+\gamma^2}>\gamma$ when $0<\gamma<1$, the theoretical guarantee of IPR decays faster than that of IWR. When $\gamma$ approaches to $1$, the two theoretical guarantees are close to each other.

\subsection{Intuitive Explanation of Three Algorithms}\label{intuitive}

\begin{figure*}
\begin{center}
\includegraphics[width=\textwidth]{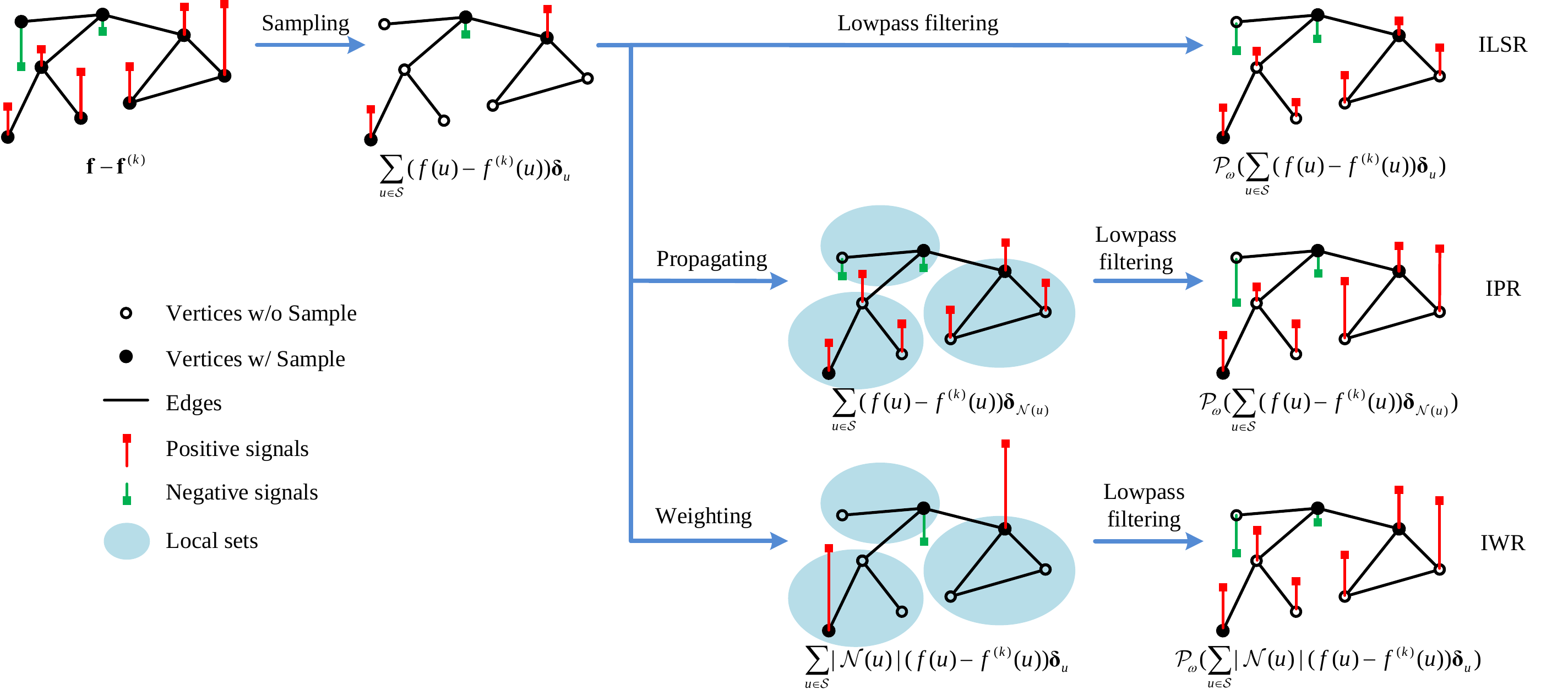}
\caption{Illustration of the iterations of the three algorithms. Compared with ILSR, IPR needs to assign the residuals on the sampled vertices to the vertices in the local set,
while IWR needs to multiply the residuals on the sampled vertices by different weights.}
\label{algorithm}
\end{center}
\end{figure*}

As illustrated in Fig. \ref{algorithm}, the differences among ILSR, IWR, and IPR lie in the way of their dealing with the residuals at the sampled vertices. In each iteration the sampled residual $\sum_{u\in\mathcal{S}}(f(u)-f^{(k)}(u))\bm{\delta}_u$ is directly projected onto the $\omega$-bandlimited space $PW_{\omega}(\mathcal{G})$ in ILSR. In IWR, the sampled residuals are multiplied by weights $|\mathcal{N}(u)|$ and then projected onto the low-frequency space. For IPR, the sampled residuals are copied and assigned to the vertices in the corresponding local sets and then the projection procedure is conducted.

Because of the weighting or propagating procedure, for each step the increment of IWR or IPR is larger than that of ILSR. It may intuitively explain why the proposed two algorithms both converge faster than ILSR.
Besides, it is easy to see from Fig. \ref{algorithm} that the graph signal composed of the propagated residuals seems closer to a low-frequency signal than the weighted residual, which means that the increment of IPR remains more than that of IWR after the projection to the low-frequency subspace. It may explain why IPR converges even faster than IWR.

\subsection{Discussions}
According to Proposition \ref{pro2} and \ref{cor1}, the estimated convergence bounds of IWR and IPR are related to the cutoff frequency and the topology of local sets.
Adequately estimating the cutoff frequency of the raw signal may accelerate the convergence of reconstruction.
For given $\mathcal S$ and $\{\mathcal{N}(u)\}_{u\in\mathcal{S}}$, the maximal multiple number and radius are to be calculated. Consequently, $Q_{\rm max}$ is determined. Therefore, a smaller known $\omega$ leads to a smaller $\gamma$, then sharper error bounds of convergence are obtained for both IWR and IPR, which may lead to a faster convergence.
The choice of local sets also affects the convergence performance, which will be discussed in the following section.

In the local-set-based methods IWR and IPR, the theoretical maximal cutoff frequency, below which the raw signal could be recovered, is much easier to obtain. For given sampling set $\mathcal S$ and associated local sets $\{\mathcal{N}(u)\}_{u\in\mathcal{S}}$, the maximal multiple number and radius can be obtained locally, then $Q_{\rm max}$ and the cutoff frequency are easy to be determined. It is quite different from that of uniqueness-set-based ILSR, where the sampling set are required to be a uniqueness set of specific cutoff frequency. The conditions for uniqueness set are determined by global measures such as eigenvalues, as shown in \cite{pesenson_sampling_2008} and \cite{narang_signal_2013}, which is rather difficult to obtain in large-scale problems.
However, according to Lemma \ref{lem0}, the condition of ILSR can also be modified into a local-set-based one, which can be locally determined, although the global condition may provide a sharper estimation.

Besides, in each iteration of ILSR and IWR, all the vertices only use information associated with themselves, while data has to be transmitted from the sampled vertices to their neighbors in IPR method. As a result, the former two methods may be easier applied in potential distributed scenario \cite{wang_distributed_2014}.

\section{Discussions on Local Sets}\label{discusslocalset}

To accelerate the convergence of signal reconstruction algorithms, a weighting (in IWR) or propagating (in IPR) procedure is introduced in the existing algorithm ILSR. Both the procedures are based on a division of the graph, i.e., local sets.
In this section, we will firstly show two special sampling set and local sets, then discuss the choice of local sets for general cases.

\subsection{Special Sampling Set and Local Sets}

If the sampling set $\mathcal{S}=\mathcal{V}$, then $|\mathcal{N}(u)|=1$, $K(u)=0$, and $Q_{\text{max}}=0$. According to Proposition \ref{pro2} and Proposition \ref{cor1}, any ${\bf f}$ satisfying $\omega<\infty$ can be reconstructed by IWR and IPR, which is a natural result.

Another extreme case is the sampling set $\mathcal{S}$ contains only one vertex and the corresponding local set is all the vertices in the graph.
In this case, only constant signals can be reconstructed from the sampled data, which means that only one discrete frequency $\omega=0$ can satisfy the condition $\omega<1/Q_{\rm max}^2$.
It is easy to understand because only the signals with the same value for all the vertices can be reconstructed from only a single sample.
The analysis above is always true no matter which vertex is chosen as the sampled one.
Therefore, the following corollary gives an estimation of the smallest positive eigenvalue of a graph.
\begin{cor}
For a graph Laplacian, its smallest positive eigenvalue $\lambda_{\text{min}}$ satisfies
$$
\lambda_{\text{min}}\ge \max_{u\in\mathcal{V(G)}}\frac{1}{K(u)R(u)},
$$
where $K(u)$ and $R(u)$ are defined in Definitions \ref{defi1} and \ref{radius}, respectively, in which the local set contains all the vertices of graph $\mathcal{G}$, i.e., $\mathcal{G}_{\mathcal{N}(u)}=\mathcal{G}$.
\end{cor}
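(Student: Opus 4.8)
The plan is to derive this corollary directly from the contraction condition already established in Lemma \ref{lemma1}, specialized to the degenerate local-set configuration described just above the statement. First I would fix an arbitrary vertex $u\in\mathcal{V}(\mathcal{G})$ and take the trivial sampling set $\mathcal{S}=\{u\}$ with the single local set $\mathcal{N}(u)=\mathcal{V}$, so that $\mathcal{G}_{\mathcal{N}(u)}=\mathcal{G}$ (this is legitimate precisely when $\mathcal{G}$ is connected, which one should note — or handle the disconnected case componentwise). With this choice, $Q_{\rm max}=\sqrt{K(u)R(u)}$, where $K(u)$ and $R(u)$ are computed on the shortest-path tree of the whole graph rooted at $u$. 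Lemma \ref{lemma1} then asserts that $({\bf I-G})$ is a contraction on $PW_{\omega}(\mathcal{G})$ for every $\omega<1/Q_{\rm max}^2 = 1/(K(u)R(u))$.

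The key step is to observe that for this one-sample configuration the Paley–Wiener space $PW_{\omega}(\mathcal{G})$ must be trivial in a suitable sense whenever $\omega$ lies strictly below the smallest positive eigenvalue threshold, and conversely that the reconstruction being possible forces a spectral constraint. Concretely: if $\omega < \lambda_{\rm min}$, then $PW_{\omega}(\mathcal{G})$ consists only of signals spanned by eigenvectors with eigenvalue $0$, i.e. constant signals on the connected graph; and indeed a single sample $f(u)$ does determine such a constant signal uniquely, so $\{u\}$ is a uniqueness set and IPR/IWR reconstruct $\bf f$. Now suppose, toward a contradiction, that $\lambda_{\rm min} < 1/(K(u)R(u))$. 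Pick $\omega$ with $\lambda_{\rm min}\le\omega<1/(K(u)R(u))$. Then $PW_{\omega}(\mathcal{G})$ strictly contains the constants — it includes an eigenvector ${\bf u}_k$ with $\lambda_k=\lambda_{\rm min}>0$, which is non-constant and therefore cannot be recovered from the single value $f(u)$ (there exist two distinct bandlimited signals agreeing at $u$). This contradicts Proposition \ref{cor1}, which guarantees exact reconstruction for all $\omega<1/Q_{\rm max}^2$. Hence $\lambda_{\rm min}\ge 1/(K(u)R(u))$, and since $u$ was arbitrary, taking the maximum over $u\in\mathcal{V}(\mathcal{G})$ gives the claimed bound.

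The main obstacle — really the only delicate point — is making the "a non-constant bandlimited signal cannot be recovered from one sample" step airtight, i.e. exhibiting two distinct signals in $PW_{\omega}(\mathcal{G})$ that agree at $u$. This is easy: take the eigenvector ${\bf u}_k$ with $\lambda_k=\lambda_{\rm min}$; if $u_k(u)=0$ then ${\bf u}_k$ and ${\bf 0}$ both lie in $PW_\omega$ and agree at $u$; if $u_k(u)\ne 0$, then ${\bf u}_k$ and the constant signal $c\,\mathbf 1$ with $c=u_k(u)$ (also in $PW_\omega$ since $\lambda=0\le\omega$) agree at $u$ but differ. Either way uniqueness fails, so $\{u\}$ is not a uniqueness set for $PW_\omega(\mathcal{G})$, and reconstruction in the sense of Proposition \ref{cor1} is impossible — contradiction. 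A secondary point worth a sentence is connectedness: the corollary implicitly assumes $\mathcal{G}$ connected so that $\lambda_{\rm min}$ is the second-smallest eigenvalue and the constants form the entire kernel; if $\mathcal{G}$ is disconnected one argues on each component, where the local set must be taken within the component of $u$.
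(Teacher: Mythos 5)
Your proposal is correct and follows essentially the same route as the paper, which justifies this corollary only by the informal discussion preceding it: take $\mathcal{S}=\{u\}$ with $\mathcal{N}(u)=\mathcal{V}$, note that a single sample can only determine constant signals, and conclude from the reconstruction guarantee of Proposition \ref{cor1} that $PW_{\omega}(\mathcal{G})$ must contain only constants for all $\omega<1/(K(u)R(u))$, i.e.\ $\lambda_{\min}\ge 1/(K(u)R(u))$. Your explicit construction of two bandlimited signals agreeing at $u$ (via the eigenvector for $\lambda_{\min}$ and a constant) and your remark on connectedness make the paper's sketch airtight without changing its substance.
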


\subsection{On the Evaluation of Local Sets}\label{discussion}
According to the sufficient condition in Proposition \ref{pro2} and \ref{cor1}, a sampling set and the associated local sets with a smaller $Q_{\text{max}}$ usually lead to a wider range of bandlimited signal which can be guaranteed to reconstruct. Besides, for a given $\omega$, a smaller $Q_{\text{max}}$ leads to a better error bound of convergence, i.e., a smaller $\gamma$ (for IPR) or $2\gamma/(1+\gamma^2)$ (for IWR). Therefore, when the graph topology is given, it is necessary to find a proper sampling set $\mathcal{S}$ and the corresponding vertex division $\{\mathcal{N}(u)\}_{u\in\mathcal{S}}$, which makes the quantity $Q_{\text{max}}$ as small as possible.

However, since the sufficient condition we give is rather conservative and not very sharp, minimizing $Q_{\text{max}}$ is only a rough way to obtain a better division of local sets.
In other words, there may be some better evaluation of local sets than $Q_{\text{max}}$.
Finding the optimal division of local sets is still an open problem and needs more comprehensive study.
Therefore we have not focused on how to construct local sets to minimize $Q_{\text{max}}$ in this paper.
It may be done better when a sharper sufficient condition is provided, which will be studied in the future work.
In this paper, only a special choice of sampling set and the associated local sets with $Q_{\text{max}}=1$ are presented in the following text.

\subsection{A Special Case of One-hop Sampling}
\label{onehopsection}

In this section, we present a special case of \emph{dense} sampling where all entries to be recovered are directly connected to the sampled vertices. One may read that such dense sampling facilitates the local sets partition.

\begin{cor}\label{cor3}
For a given sampling set $\mathcal S$, if the local sets $\{\mathcal{N}(u)\}_{u\in\mathcal{S}}$ satisfies
\begin{equation}\label{onehop}
\max_{v\in \mathcal{N}(u)}\text{dist}(u,v)\le 1,\quad \forall u\in\mathcal{S},
\end{equation}
the sufficient condition of recovery in Proposition \ref{pro2} and Proposition \ref{cor1} can be refined as
$\omega<1$ and $\gamma=\sqrt{\omega}$.
\end{cor}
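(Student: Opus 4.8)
The plan is to show that the one-hop condition \eqref{onehop} forces both local measures to be at most $1$, so that $Q_{\max}\le 1$, and then to feed this into Propositions \ref{pro2} and \ref{cor1}. First I would bound the radius: by Definition \ref{radius}, $R(u)=\max_{v\in\mathcal{N}(u)}\mathrm{dist}(v,u)$, and since distances within $\mathcal{G}_{\mathcal{N}(u)}$ are no smaller than those in $\mathcal{G}$, condition \eqref{onehop} gives $R(u)\le 1$ for every $u\in\mathcal{S}$.

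Next I would bound the maximal multiple number $K(u)$ from Definition \ref{defi1}. Under \eqref{onehop} every $v\in\mathcal{N}(u)\setminus\{u\}$ is adjacent to $u$ in $\mathcal{G}_{\mathcal{N}(u)}$, so the shortest-path tree $\mathcal{T}(u)={\rm SPT}(\mathcal{G}_{\mathcal{N}(u)})$ is the star centred at $u$. Deleting $u$ and its incident edges therefore leaves $|\mathcal{N}(u)|-1$ singleton subtrees, so $|\mathcal{T}_u(v)|=1$ for each neighbour $v$ and $K(u)\le 1$ (with $K(u)=0$ in the degenerate case $\mathcal{N}(u)=\{u\}$). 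The same star structure gives $d_{\mathcal{N}(u)}(u)=|\mathcal{N}(u)|-1$, hence $\tilde K(u)=|\mathcal{N}(u)|-d_{\mathcal{N}(u)}(u)\le 1$ as well, so the conclusion holds whether one works with $Q_{\max}$ or with $\tilde Q_{\max}$.

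Combining the two bounds, $Q_{\max}=\max_{u\in\mathcal{S}}\sqrt{K(u)R(u)}\le 1$, so $1/Q_{\max}^2\ge 1$; consequently the hypothesis $\omega<1$ already implies $\omega<1/Q_{\max}^2$ and Propositions \ref{pro2} and \ref{cor1} apply. Moreover $\gamma=Q_{\max}\sqrt{\omega}\le\sqrt{\omega}<1$, and since the convergence factors $2\gamma/(1+\gamma^2)$ and $\gamma$ are nondecreasing in $\gamma$ on $[0,1)$, the error bounds $\bigl(2\gamma/(1+\gamma^2)\bigr)^k$ and $\gamma^k$ remain valid after $\gamma$ is replaced by the larger quantity $\sqrt{\omega}$. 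This is precisely the refined statement.

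I do not expect a real obstacle here: the argument is a structural observation, with the only minor care needed being the degenerate local set $\mathcal{N}(u)=\{u\}$ and the monotonicity of the convergence bounds in $\gamma$. The whole content lies in recognising that dense (one-hop) sampling makes each shortest-path tree a star, which collapses both $K(u)$ and $R(u)$ to their minimal nonzero values.
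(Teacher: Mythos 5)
Your proposal is correct and follows essentially the same route as the paper: the one-hop condition forces $R(u)\le 1$ and $d_{\mathcal{N}(u)}(u)=|\mathcal{N}(u)|-1$, hence $\tilde K(u)\le 1$ and $\tilde Q_{\max}\le 1$, after which Propositions \ref{pro2} and \ref{cor1} (via Corollary \ref{pro4}) give the refined condition. Your additional observations --- that the shortest-path tree is a star so $K(u)\le 1$ directly, and that the convergence factors are monotone in $\gamma$ so replacing $\gamma$ by $\sqrt{\omega}$ is harmless --- are correct refinements the paper leaves implicit.
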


\begin{proof}
The condition (\ref{onehop}) means that all the vertices except $u$ in $\mathcal{N}(u)$ are connected to $u$. It implies that $d_{\mathcal{N}(u)}(u)=|\mathcal{N}(u)|-1$ and then $\tilde{K}(u)\le1$. Besides, it is obvious to see $R(u)\le1$. Therefore, $\tilde{Q}_{\text{max}}=1$ and Corollary \ref{cor3} is obtained.
\end{proof}

A greedy method is proposed and described in Table \ref{algOneHop}, which can produce the one-hop sampling set and the associated local sets at the same time and satisfy the condition of (\ref{onehop}).
The reason for selecting the vertex with the largest degree and its neighbors is that more vertices can be removed in each step, which may lead to a sampling set with fewer vertices.
One may accept that this is a rather economical choice of sampling set when there is no restriction on the number or location of the sampling vertices, because both $K(u)$ and $R(u)$ are small simultaneously.

\begin{table}[t]
\renewcommand{\arraystretch}{1.2}
\caption{A Greedy Method for a One-hop Sampling Set.}\label{algOneHop}
\begin{center}
\begin{tabular}{l}
\toprule[1pt]
{\bf Input:} \hspace{0.5em} Graph $\mathcal{G(V,E)}$;\\
{\bf Output:} \hspace{0.5em} One-hop sampling set $\mathcal{S}$, local sets $\{\mathcal{N}(u)\}_{u\in\mathcal{S}}$;\\
\hline
{\bf Initialization:}\hspace{0.5em}$\mathcal{S}=\emptyset$;\\
{\bf Loop:}\\
\hspace{1.3em} 1) Find the largest-degree vertex, $\displaystyle u=\arg\max_{v\in \mathcal{V}}d_{\mathcal{G}}(v)$;\\
\hspace{1.3em} 2) Add $u$ into the sampling set, $\displaystyle \mathcal{S}=\mathcal{S}\cup \{u\}$;\\
\hspace{1.3em} 3) The one-hop local set $\displaystyle \mathcal{N}(u)=\{u\}\cup\{v\in\mathcal{V}|(u,v)\in\mathcal{E}\}$;\\
\hspace{1.3em} 4) Remove the edges, $\displaystyle \mathcal{E}=\mathcal{E}\backslash\{(p,q)|p\in\mathcal{N}(u),q\in\mathcal{V}\}$;\\
\hspace{1.3em} 5) Remove the vertices, $\displaystyle \mathcal{V}=\mathcal{V}\backslash\mathcal{N}(u)$ and $\mathcal{G=G(V,E)}$;\\
{\bf Until:}\hspace{0.5em} $\mathcal{V}=\emptyset$.\\
\bottomrule[1pt]
\end{tabular}
\end{center}
\end{table}

\section{Relationship with Time Domain Results}\label{sec6}

\begin{figure*}
\begin{center}
\includegraphics[width=\textwidth]{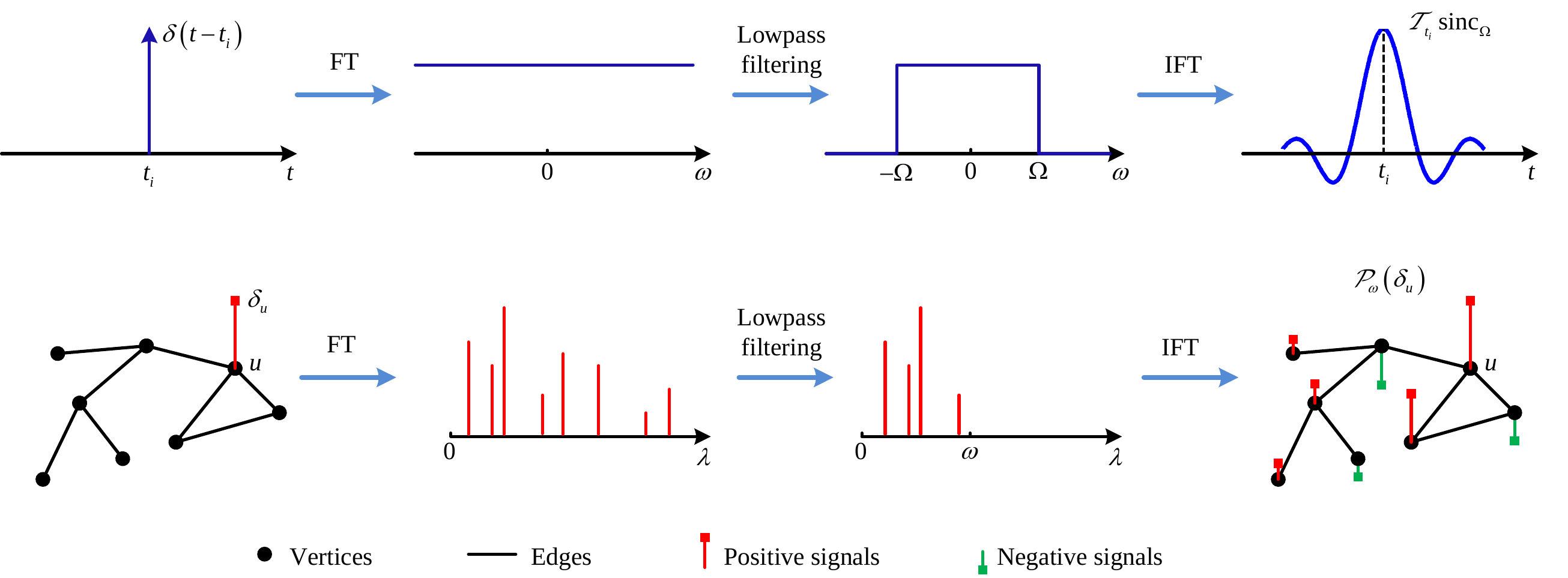}
\caption{The correspondence between irregular sampling in the time domain and that on graph.
Both $\mathcal{T}_{t_i}{\rm sinc}_{\Omega}$ and $\mathcal{P}_{\omega}(\bm{\delta}_u)$ are the projections of $\delta$-functions onto bandlimited spaces.
Under certain conditions, for all the sampled points or vertices, these kinds of signals $\{\mathcal{T}_{t_i}{\rm sinc}_{\Omega}\}_{t_i\in\mathcal{S}}$
and $\{\mathcal{P}_{\omega}(\bm{\delta}_u)\}_{u\in\mathcal{S}}$ become frames. Consequently, the original signals can be reconstructed by the sampled data.}
\label{TimeAndVertex}
\end{center}
\end{figure*}

Bandlimited signal sampling and reconstruction on graph is closely related to irregular sampling \cite{feichtinger_theory_1994,grochenig_adiscrete_1993}
or nonuniform sampling \cite{marvasti_nonuniform_2001} in the time domain,
which sheds light on the analysis of graph signal.
There have existed several iterative reconstruction methods and theoretically analysis of time-domain irregular sampling
\cite{sauer_iterative_1987,marvasti_recovery_1991,grochenig_reconstruction_1992,feichtinger_theory_1994},
some of which are related to the frame theory \cite{marvasti_recovery_1991,benedetto_irregular_1992,feichtinger_theory_1994}.
Some further works extend the results to high dimensional spaces \cite{feichtinger_theory_1994} and manifolds \cite{pesenson_poincare_2004,feichtinger_recovery_2004}.

By exploiting the similarities between time-domain irregular sampling and graph signal sampling, some results of this work have consistent formulation with the corresponding results in the time domain. The reconstruction methods also have correspondences in the time domain.

Results on time-domain irregular sampling show that $\{\mathcal{T}_{t_i}{\rm sinc}_{\Omega}\}_{t_i\in\mathcal{S}}$ is a frame for $\mathcal{B}_{\Omega}^2$
if the sampling set $\mathcal{S}$ satisfies some particular conditions,
where $\mathcal{T}_{t_i}f(t)=f(t-t_i)$ denotes the translation of $f(t)$, ${\rm sinc}_{\Omega}$ denotes the sinc function whose bandwidth is $\Omega$
and $\mathcal{B}_{\Omega}^2$ denotes the space of $\Omega$-bandlimited square integrable signal.

Correspondingly, for the graph signal ${\mathbf f}\in PW_{\omega}(\mathcal{G})$, $\{\mathcal{P}_{\omega}(\bm{\delta}_u)\}_{u\in\mathcal{S}}$ is a frame under some conditions.
The result is consistent with that in the time domain.
The correspondence between irregular sampling in the time domain and that on graph is illustrated in Fig. \ref{TimeAndVertex}.
In the graph signal sampling problem, $\{\mathcal{P}_{\omega}(\bm{\delta}_u)\}_{u\in\mathcal{S}}$ corresponds to the frame $\{\mathcal{T}_{t_i}{\rm sinc}_{\Omega}\}_{t_i\in\mathcal{S}}$ in the time domain.
The essence of the two problem is very similar and theoretical results on sampling and reconstruction of graph signals can be obtained enlightened by irregular sampling in the time domain.

The ideas of graph signal reconstruction methods ILSR, IWR and IPR have correspondences in time-domain, which are Marvasti method \cite{marvasti_recovery_1991}, adaptive weights method \cite{feichtinger_theory_1994} and Voronoi method \cite{grochenig_reconstruction_1992}, respectively.
However, a graph is discrete and the local topology of each sampling vertex is irregular, which leads to some new problems related to local sets in the sampling and reconstruction of graph signals.

The correspondence between time-domain irregular sampling and graph signal sampling is shown in Table \ref{tab1}.

\begin{table}[h!t]
\caption{The correspondence between irregular sampling in the time domain and that on graph.}\label{tab1}
    \begin{center}
        \begin{tabular}{ccc}
\toprule[1pt]
            Terms & Time Domain & Vertex Domain \\ \hline
            Signal & $f(t)$ & ${\bf f}$ \\
            Cutoff frequency & $\Omega$ & $\omega$ \\
            Low-frequency space & $\mathcal{B}_{\Omega}^2$ & $PW_{\omega}(\mathcal{G})$ \\
            Shifted impulse & $\delta(t-t_i)$ & $\bm{\delta}_u$ \\
            Shifted sinc function & $\mathcal{T}_{t_i}{\rm sinc}_{\Omega}$ & $\mathcal{P}_{\omega}(\bm{\delta}_u)$ \\
            Neighborhood & $[(t_{i-1}+t_i)/2,(t_i+t_{i+1})/2)$ & $\mathcal{N}(u)$ \\
            Neighbor indicator & $1_{[(t_{i-1}+t_i)/2,(t_i+t_{i+1})/2)}$ & $\bm{\delta}_{\mathcal{N}(u)}$ \\
            Weight & $\sqrt{(t_{i+1}-t_{i-1})/2}$ & $\sqrt{|\mathcal{N}(u)|}$ \\
            $\!\!\!\!$Reconstruction method & Marvasti Method & ILSR \\
            $\!\!\!\!$Reconstruction method & Adaptive Weights Method & IWR \\
            $\!\!\!\!$Reconstruction method & Voronoi Method & IPR \\
\bottomrule[1pt]
        \end{tabular}
    \end{center}
\end{table}

\section{Experiments}\label{sec7}
The Minnesota road graph \cite{gleich_matlabbgl} is chosen as the graph, which has $2640$ vertices and $6604$ edges, to test the proposed reconstruction algorithms. The bandlimited signal is generated by first generating a random signal and then removing its high-frequency components.

\subsection{Convergence Rate}
The convergence rate of the three algorithms are compared here.
A one-hop sampling set satisfying (\ref{onehop}) is chosen as the sampling set.
The one-hop sampling set and the corresponding local sets are obtained by the greedy method in Table \ref{algOneHop}.
This sampling set has $872$ vertices, which is about one third of all.
The cutoff frequency is $0.25$.
The convergence curves of ILSR, IWR, and IPR are illustrated in Fig. \ref{exp2}.
It is obvious that the convergence rate of the proposed algorithms is significantly improved compared with the reference.
Furthermore, IPR is better than IWR on the convergence rate.
Both observations are in accordance with the analysis in \ref{intuitive}.

\begin{figure}[t]
\begin{center}
\includegraphics[width=9cm]{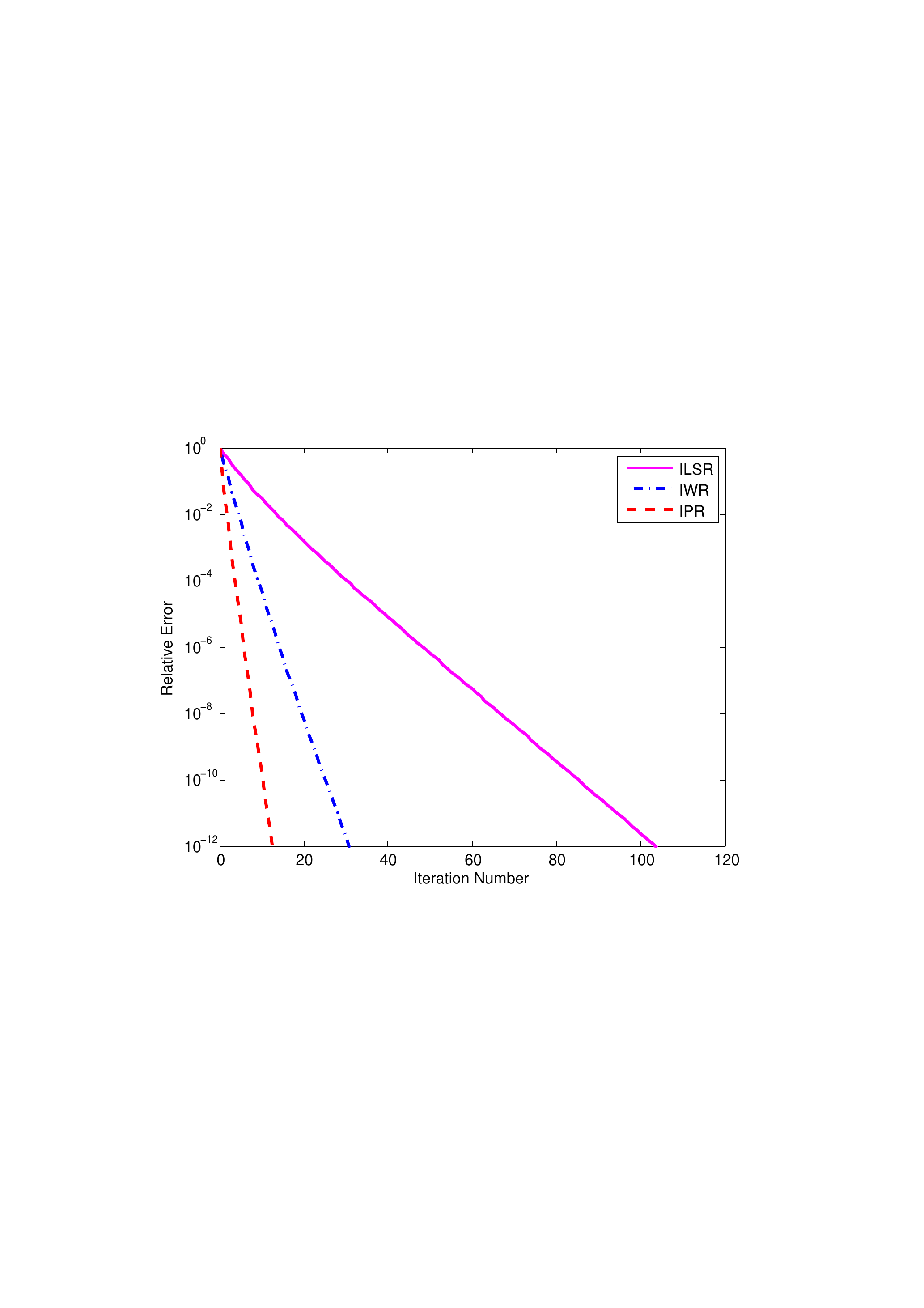}
\caption{Convergence curves of ILSR, IWR, and IPR.}
\label{exp2}
\end{center}
\end{figure}

\subsection{Sampling Geometry}\label{EXPSG}

The choice of sampling set may affect the performance of convergence.
Two different sampling sets are used to reconstruct the same bandlimited original signal. Both of the sampling sets have the same amount of vertices.
The first sampling set is the one-hop set satisfying (\ref{onehop}), with $872$ sampled vertices and $Q_{\rm{max}}=1$.
For the second sampling set, $872$ vertices are selected uniformly at random among all the vertices.
Each unsampled vertex belongs to the local set associated with its nearest sampled vertex.
Then $K(u)$ and $R(u)$ can be obtained for each local set, and we have $Q_{\rm{max}}=\sqrt{40}$ with the corresponding $K(u)=8$ and $R(u)=5$.
The convergence curves of the two sampling sets using the three reconstruction methods are illustrated in Fig. \ref{exp3}.
For all the algorithms, the convergence is faster by using the sampled data of the one-hop sampling set than by using the randomly chosen vertex set.
It means that the sampling geometry has influence on the reconstruction.
A sampling set and the local sets with smaller $Q_{\rm{max}}$ may converge faster.

\begin{figure}[ht]
\begin{center}
\includegraphics[width=9cm]{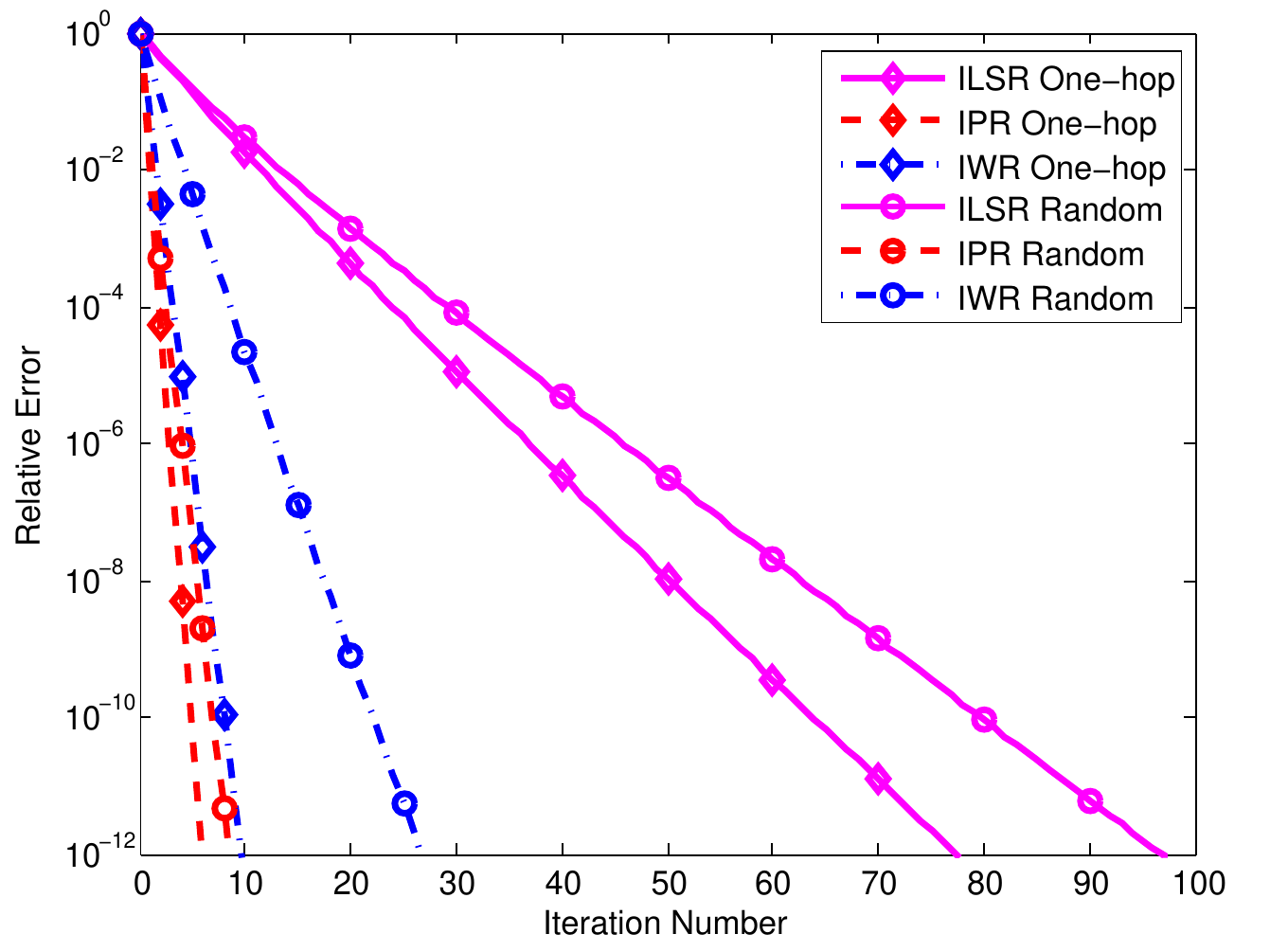}
\caption{Convergence curves of ILSR, IWR, and IPR, on one-hop sampling set and randomly chosen sampling set.}
\label{exp3}
\end{center}
\end{figure}

\subsection{Actual and Priori Known Cutoff Frequencies}

\begin{figure}[ht]
\begin{center}
\includegraphics[width=9cm]{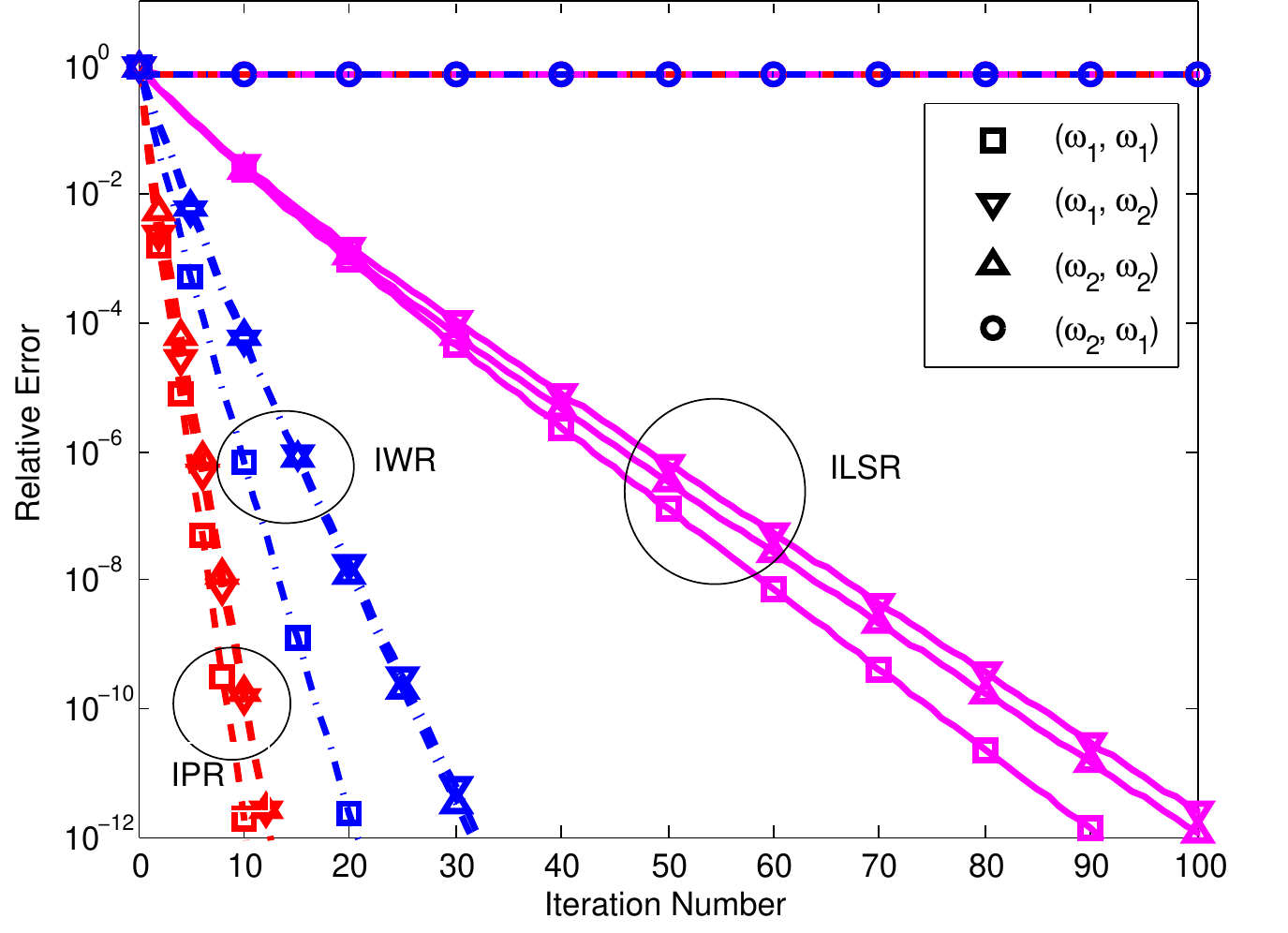}
\caption{Convergence curves of three cases with different actual and priori known cutoff frequencies.}
\label{exp5}
\end{center}
\end{figure}

The cutoff frequency is a crucial quantity in the reconstruction of the bandlimited signal.
For a bandlimited signal the cutoff frequency is known as a priori knowledge.
However, the priori knowledge may be an estimate rather than ground truth. In this experiment, the effect on the imprecise knowledge of the cutoff frequency is investigated.
For frequencies $\omega_1$ and $\omega_2$ satisfying $\omega_1<\omega_2$, the following four cases are considered:
1)$(\omega_1, \omega_1)$; 2)$(\omega_1, \omega_2)$; 3)$(\omega_2, \omega_2)$; 4)$(\omega_2, \omega_1)$, where $(\omega_i, \omega_j)$ means the actual cutoff frequency is $\omega_i$ and the priori known frequency is $\omega_j$.

In the experiment we set $\omega_2=2\omega_1$.
The convergence curves are illustrated in Fig. \ref{exp5}.
It is easy to understand that the relative error of case 4) is large because only about half of the energy can be preserved.
For case 1) and 2), although both the original signals are actually $\omega_1$-bandlimited, the reconstruction converges faster in case 1) than case 2) because of a more accurate priori knowledge.
Comparing case 2) and 3), it can be seen that the convergence curves almost coincide, which means that although the bandwidth of the original signal is reduced, the convergence rate may increase little if the reduction is not priori known.
The experimental results show that the convergence rate depends little on the actual cutoff frequency but depends more on the cutoff frequency the signal is regarded to have.
If we have more accurate priori knowledge on the cutoff frequency, the reconstruction will be more efficient.

In fact, the $\omega$ of $PW_{\omega}(\mathcal{G})$ is the priori known cutoff frequency, rather than the actual one.
Even though the actual cutoff frequency of the original signal is smaller than $\omega$, the decay coefficient $\gamma =Q_{\rm max}\sqrt{\omega}$ is determined by $\omega$, i.e., the cutoff frequency of the subspace.
In other words, convergence is a property of the frame, which is determined by the low-frequency subspace, and not of the signal we are trying to reconstruct.

\subsection{Theoretical and Numerical Bounds for Cutoff Frequency}
The given sufficient condition for the convergence of IWR and IPR is rather conservative and not very sharp for all the graphs.
This experiment shows the actual cutoff frequency that the reconstruction algorithms can recover.
In this experiment, the sampling set and local sets are the same as that in \ref{EXPSG} with $Q_{\rm{max}}=\sqrt{40}$.
The experimental result is illustrated in Fig. \ref{exp13}.
For each cutoff frequency, $100$ signals within the subspace are generated randomly. The curves show the rate of signals that converge within a relative error $10^{-3}$ in $20$ iterations.
For this sampling set and local sets, the sufficient condition we provide is $\omega<0.025$. It can be seen that the reconstruction methods work in a larger low-frequency subspace, which means there is still room for improvement to give a better bound.

\begin{figure}[ht]
\begin{center}
\includegraphics[width=9cm]{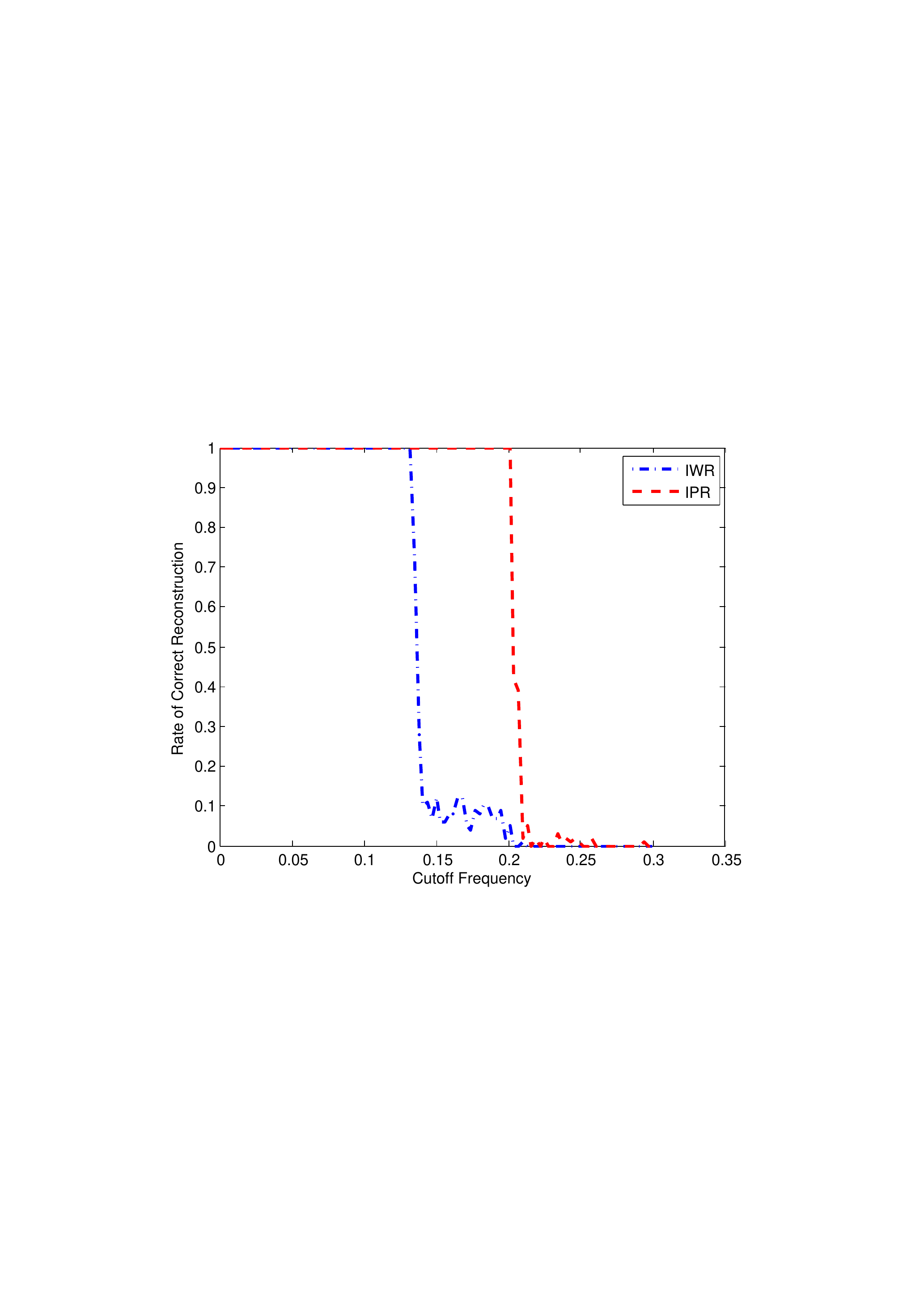}
\caption{The rate of signals that converge within a relative error $10^{-3}$ in $20$ iterations.}
\label{exp13}
\end{center}
\end{figure}

\subsection{Robustness against Noise}
\subsubsection{Observation Noise}
Suppose there is noise involved in the observation of sampled graph signal. This experiment focuses on the robustness to the observation noise of the three algorithms.
In this experiment the noise is generated as independent identical distributed Gaussian sequence.
As shown in Fig. \ref{exp6}, the steady-state error decreases as the SNR increases.  The three methods have almost the same performance against observation noise.

\begin{figure}[ht]
\begin{center}
\includegraphics[width=9cm]{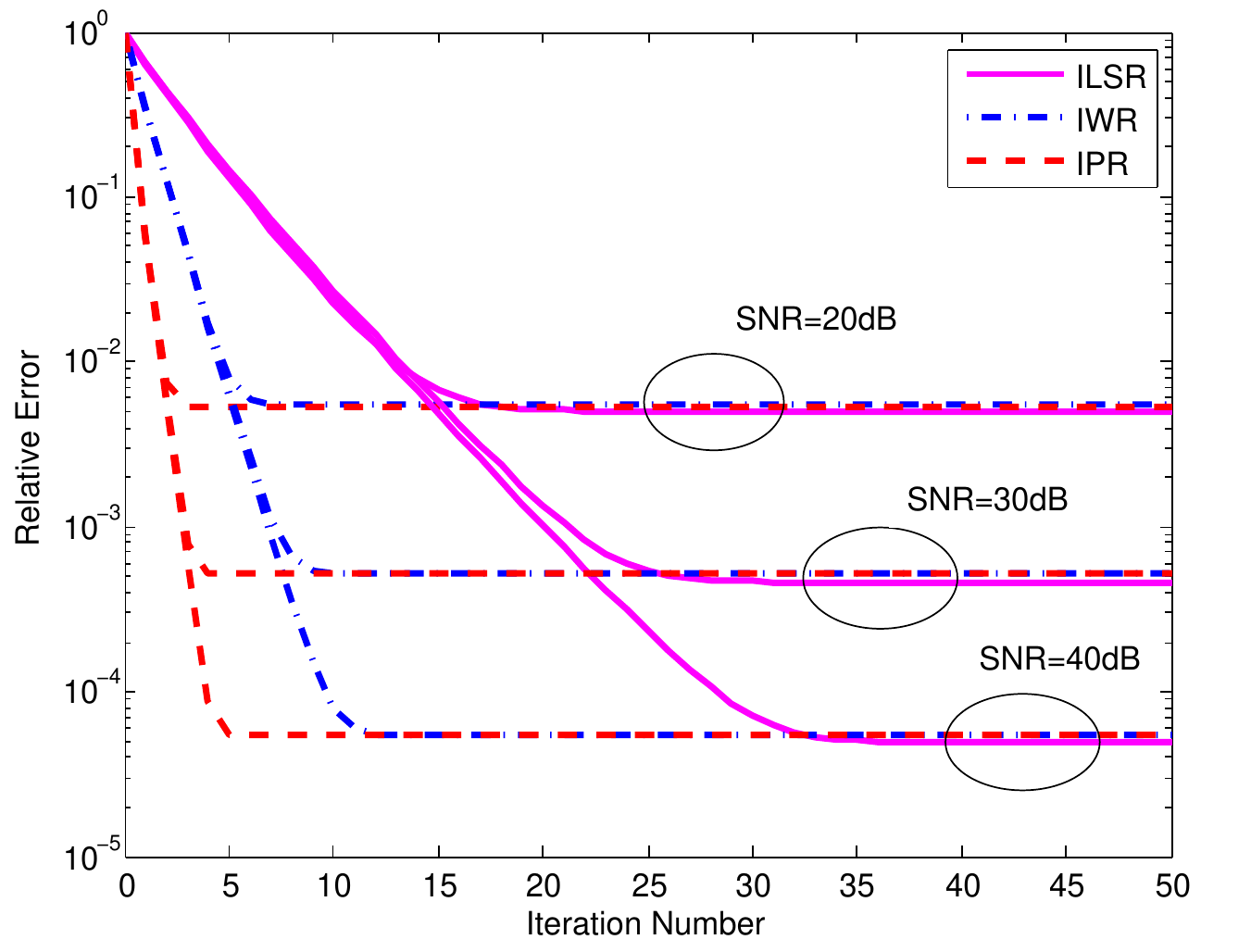}
\caption{Convergence curves of three reconstruction methods with various observation SNR.}
\label{exp6}
\end{center}
\end{figure}

\subsubsection{Reconstruction of Approximated Bandlimited Signals}
\begin{figure}[ht]
\begin{center}
\includegraphics[width=9cm]{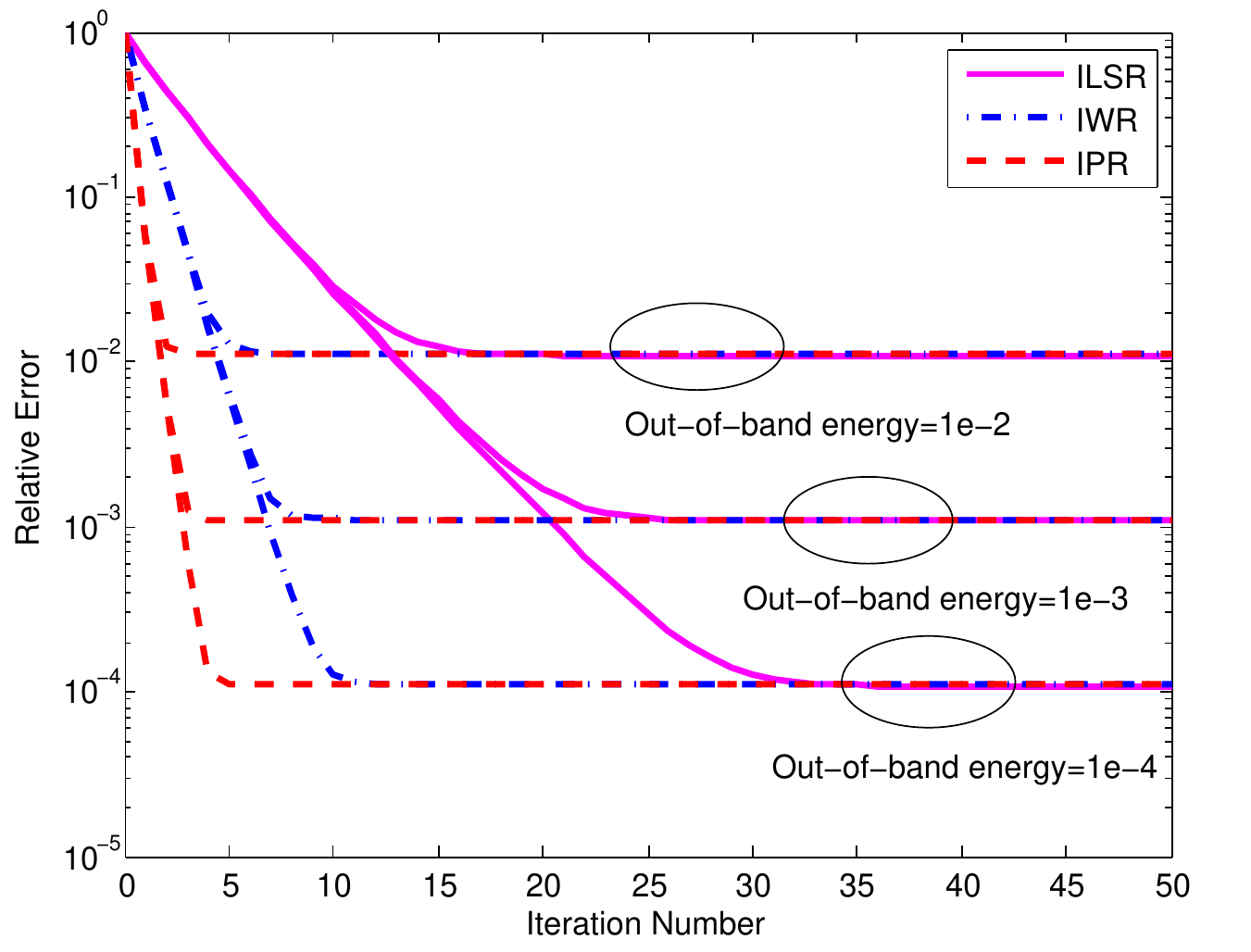}
\caption{Convergence curves of three reconstruction methods for approximated bandlimited signal.}
\label{exp8}
\end{center}
\end{figure}

Real-world data is always not strictly bandlimited. However, most smooth signals over graph can be regarded as approximated bandlimited signals.
In the experiment in Fig. \ref{exp8}, the three methods are used to reconstruct signals with different out-of-band energy.
The steady-state error will be larger for signals with more energy out of band. Besides, the three algorithms perform almost the same for approximated bandlimited signals.

\section{Conclusion}\label{sec8}

In this paper, the problem of graph signal reconstruction in bandlimited space is studied.
We first propose a concept of local set, where all vertices in the graph are divided into disjoint local sets associated with the sampled vertices. Based on frame theory, an operator named local propagation is then proposed and proved to be contraction mapping. Consequently, several series of signals are proved to be frames and their frame bounds are estimated.
Above theory provides solid foundation for developing efficient reconstruction algorithms. Two local-set-based iterative methods called IWR and IPR are proposed to reconstruct the missing data from the observed samples. Strict proofs of convergence and error bounds of IWR and IPR are presented.
After comprehensive discussion on the proposed algorithms, we explore the correspondence between time-domain irregular sampling and graph signal sampling, which sheds light on the analysis in the graph vertex domain.
Experiments, which verify the theoretical analysis, show that IPR performs beyond IWR, and both the proposed methods converge significantly faster than the reference algorithm.

\section{Appendix}

\subsection{Proof of Lemma \ref{lemma1}}\label{proof1}
\begin{proof}
By the definition of ${\bf G}$, one has
\begin{align}\label{lem1-1}
\|{\mathbf f}-{\bf G}{\mathbf f}\|^2=&\left\|\mathcal{P}_{\omega}\left({\mathbf f}-\sum_{u\in \mathcal{S}}f(u)\bm{\delta}_{\mathcal{N}(u)}\right)\right\|^2\nonumber\\
\le&\left\|{\mathbf f}-\sum_{u\in \mathcal{S}}f(u)\bm{\delta}_{\mathcal{N}(u)}\right\|^2\nonumber\\
\le&\sum_{u\in \mathcal{S}}\sum_{v\in \mathcal{N}(u)}|f(v)-f(u)|^2.
\end{align}
Considering that $\mathcal{N}(u)$ is connected, there is always a shortest path within $\mathcal{N}(u)$ from any $v\in\mathcal{N}(u)$ to $u$, which is denoted as $(v, v_1, \cdots, v_{k_v}, u)$. One has
\begin{align}\label{lem1-3}
|f(v)-f(u)|^2
\le R(u)\left(|f(v)-f(v_1)|^2+\cdots +|f(v_{k_v})-f(u)|^2\right),
\end{align}
which is because any path is not longer than $R(u)$.

For each $v$ satisfying $(u,v)\in \mathcal{E}(\mathcal{T}(u))$, the path from any vertex in $\mathcal{T}_u(v)$ to $u$ contains edge $(u,v)$ and this edge is counted for $|\mathcal{T}_u(v)|$ times. By the definition of $K(u)$, each edge in $\mathcal{N}(u)$ is counted for no more than $K(u)$ times.
Then,
\begin{align}\label{lem1-2}
\sum_{v\in \mathcal{N}(u)}\!\!\!|f(v)-f(u)|^2\le K(u)R(u)\!\!\!\!\!\sum_{\substack{(p, q)\in\mathcal{E}\\ p, q\in \mathcal{N}(u)}}\!\!\!\!\!|f(p)-f(q)|^2,
\end{align}

By the assumption of $\omega$-bandlimited signal, the following inequality is established
\footnote{Lemma 2.1 of \cite{fuhr_poincare_2013} has proved a more general case for weighted graphs.}.
\begin{align}\label{lem1-4}
\sum_{(p,q)\in\mathcal{E}}|f(p)-f(q)|^2
=&\sum_{p\in \mathcal{V}}d(p)|f(p)|^2-2\sum_{(p,q)\in\mathcal{E}}f(p)f(q)\nonumber\\
=&{\mathbf f}^{\rm T}{\bf L}{\mathbf f}={\mathbf f}^{\rm T}{\bf V\Lambda V}^{\rm T}{\mathbf f}=\hat{{\mathbf f}}^{\rm T}{\bf \Lambda} \hat{{\mathbf f}}\nonumber\\
=&\sum_{\lambda_i\le\omega}\lambda_i |\hat{f}(i)|^2\le \omega\hat{{\mathbf f}}^{\rm T}\hat{{\mathbf f}}=\omega\|{\mathbf f}\|^2.
\end{align}
In the above derivation, $d(p)$ denotes the degree of vertex $p$, and $\hat{{\mathbf f}}$ denotes the graph Fourier transform of ${\bf f}$. The last inequality is because the components of $\hat{{\mathbf f}}$ corresponding to the frequencies higher than $\omega$ are zero for ${\mathbf f}\in PW_{\omega}(\mathcal{G})$.

Combining (\ref{lem1-1}), (\ref{lem1-2}), and (\ref{lem1-4}), we have
\begin{align}
\|{\mathbf f}-{\bf G}{\mathbf f}\|^2\le&\sum_{u\in \mathcal{S}}\left(K(u)R(u)\!\!\!\!\sum_{\substack{(p, q)\in\mathcal{E}\\ p,q\in \mathcal{N}(u)}}\!\!\!\!|f(p)-f(q)|^2\right)\nonumber\\
\le&Q_{\text{max}}^2\sum_{(p, q)\in\mathcal{E}}|f(p)-f(q)|^2\nonumber\\
\le& Q_{\text{max}}^2\omega\|{\mathbf f}\|^2\nonumber
\end{align}
and Lemma \ref{lemma1} is proved.
\end{proof}

\subsection{Proof of Proposition \ref{pro3}}\label{proof3}
\begin{proof}
By the definition of local propagation, $\forall {\mathbf f}\in PW_{\omega}(\mathcal{G})$, one has
\begin{align}
{\bf G}{\mathbf f}=&\sum_{u\in \mathcal{S}}\langle {\mathbf f}, \bm{\delta}_u\rangle \mathcal{P}_{\omega}(\bm{\delta}_{\mathcal{N}(u)})\nonumber\\
=&\sum_{u\in \mathcal{S}}\langle \mathcal{P}_{\omega}({\mathbf f}), \bm{\delta}_u\rangle \mathcal{P}_{\omega}(\bm{\delta}_{\mathcal{N}(u)})\nonumber\\
=&\sum_{u\in \mathcal{S}}\langle {\mathbf f}, \mathcal{P}_{\omega}(\bm{\delta}_u)\rangle \mathcal{P}_{\omega}(\bm{\delta}_{\mathcal{N}(u)}).\label{eq1inproof3}
\end{align}
Utilizing \eqref{eq1inproof3} in Lemma \ref{lemma1}, one gets
\begin{equation}\label{lem2-1}
\left\|{\mathbf f}-\sum_{u\in \mathcal{S}}\langle {\mathbf f}, \mathcal{P}_{\omega}(\bm{\delta}_u)\rangle \mathcal{P}_{\omega}(\bm{\delta}_{\mathcal{N}(u)})\right\|\le \gamma\|{\mathbf f}\|.
\end{equation}
For all ${\mathbf f}\in PW_{\omega}(\mathcal{G})$ and $\{g_u\}_{u\in \mathcal{S}}$, we have
\begin{align}\label{lem2-2}
\sum_{u\in \mathcal{S}}|\langle {\mathbf f}, \mathcal{P}_{\omega}(\bm{\delta}_u)\rangle|^2
=& \sum_{u\in \mathcal{S}}|f(u)|^2\le\|{\mathbf f}\|^2.
\end{align}
and
\begin{align}\label{lem2-3}
\left\|\sum_{u\in \mathcal{S}}g_u \mathcal{P}_{\omega}(\bm{\delta}_{\mathcal{N}(u)})\right\|^2
=&\left\|\mathcal{P}_{\omega}\left(\sum_{u\in \mathcal{S}}g_u\bm{\delta}_{\mathcal{N}(u)}\right)\right\|^2\nonumber\\
\le& \left\|\sum_{u\in \mathcal{S}}g_u\bm{\delta}_{\mathcal{N}(u)}\right\|^2\nonumber\\
=&\sum_{u\in \mathcal{S}}|\mathcal{N}(u)|\cdot|g_u|^2\nonumber\\
\le& N_{\text{max}}\sum_{u\in \mathcal{S}}|g_u|^2.
\end{align}
Combining (\ref{lem2-1}), (\ref{lem2-2}) and (\ref{lem2-3}) and Proposition 2 in \cite{feichtinger_theory_1994}
\footnote{
Proposition 2 in \cite{feichtinger_theory_1994}: Suppose $\{{\bf e}_n\}$ and $\{{\bf h}_n\}$ satisfy that there exist constant $C_1, C_2>0$ and $0\le \gamma<1$, so that
$\sum|\langle {\bf f}, {\bf e}_n\rangle|^2\le C_1\|{\bf f}\|^2$, $\|\sum \lambda_n{\bf h}_n\|^2\le C_2\sum |\lambda_n|^2$ and
$\|{\bf f}-\sum\langle {\bf f}, {\bf e}_n\rangle{\bf h}_n\|\le\gamma\|{\bf f}\|$ for all ${\bf f}\in \mathcal{H}$ and $\{\lambda_n\}$.
Then $\{{\bf e}_n\}$ is a frame with frame bounds $(1-\gamma)^2/C_2$ and $C_1$, and $\{{\bf h}_n\}$ is a frame with bounds $(1-\gamma)^2/C_1$ and $C_2$.
},
$\{\mathcal{P}_{\omega}(\bm{\delta}_{\mathcal{N}(u)})\}_{u\in \mathcal{S}}$ is a frame with bounds $(1-\gamma)^2$ and $N_{\text{max}}$,
and $\{\mathcal{P}_{\omega}(\bm{\delta}_u)\}_{u\in \mathcal{S}}$ is a frame with bounds $(1-\gamma)^2/N_{\text{max}}$ and $1$.
Proposition \ref{pro3} is proved.
\end{proof}

\subsection{Proof of Lemma \ref{lemma3}}
\label{proof4}

\begin{proof}
According to Lemma \ref{lemma1} and Proposition \ref{pro4}, we have $\|{\bf I-G}\|\le \gamma<1$ for $PW_{\omega}(\mathcal{G})$ when $\gamma=Q_{\text{max}}\sqrt{\omega}<1$.
Then ${\bf G}$ is invertible and $1-\gamma\le \|{\bf G}\|\le 1+\gamma$ for $PW_{\omega}(\mathcal{G})$.
\begin{align}
\left\|{\mathbf f}\right\|^2=&\|{\bf G}^{-1}{\bf G}{\mathbf f}\|^2\nonumber\\
\le& (1-\gamma)^{-2}\|{\bf G}{\mathbf f}\|^2\nonumber\\
\le& (1-\gamma)^{-2}\left\|\sum_{u\in \mathcal{S}}f(u)\bm{\delta}_{\mathcal{N}(u)}\right\|^2\nonumber\\
=& (1-\gamma)^{-2}\sum_{u\in \mathcal{S}}|\mathcal{N}(u)|\cdot|f(u)|^2.\nonumber
\end{align}
Then the left inequality of Lemma \ref{lemma3} is proved.

From the proof of Lemma \ref{lemma1}, it is known that
$$
\left\|{\mathbf f}-\sum_{u\in \mathcal{S}}f(u)\bm{\delta}_{\mathcal{N}(u)}\right\|\le \gamma\|{\mathbf f}\|.
$$
Therefore,
\begin{align}
\sum_{u\in \mathcal{S}}|\mathcal{N}(u)|\cdot|f(u)|^2
=&\left\|\sum_{u\in \mathcal{S}}f(u)\bm{\delta}_{\mathcal{N}(u)}\right\|^2\nonumber\\
\le&\left(\|{\mathbf f}\|+\left\|{\mathbf f}-\sum_{u\in \mathcal{S}}f(u)\bm{\delta}_{\mathcal{N}(u)}\right\|\right)^2 \nonumber\\
\le& (1+\gamma)^2\|{\mathbf f}\|^2,\nonumber
\end{align}
which is the right inequality of Lemma \ref{lemma3}.

Considering
\begin{align}
|\langle {\mathbf f}, \sqrt{|\mathcal{N}(u)|}\mathcal{P}_{\omega}(\bm{\delta}_{u})\rangle|^2
&=|\mathcal{N}(u)|\cdot|\langle \mathcal{P}_{\omega}({\mathbf f}),\bm{\delta}_{u}\rangle|^2\nonumber\\
&=|\mathcal{N}(u)|\cdot|f(u)|^2,\nonumber
\end{align}
The inequalities imply that $\{\sqrt{|\mathcal{N}(u)|}\mathcal{P}_{\omega}(\bm{\delta}_{u})\}_{u\in \mathcal{S}}$ is a frame for $PW_{\omega}(\mathcal{G})$ with bounds
$\left(1-\gamma\right)^2$ and $\left(1+\gamma\right)^2$.
\end{proof}

\subsection{Proof of Proposition \ref{pro2}}\label{proof5}
\begin{proof}
From Lemma \ref{lemma3}, $\{\sqrt{|\mathcal{N}(u)|}\mathcal{P}_{\omega}(\bm{\delta}_{u})\}_{u\in \mathcal{S}}$ is a frame with bounds
$A=(1-\gamma)^2$ and $B=(1+\gamma)^2$.
By the property of frame \cite{Christensen_an_2002}, the original signal can be reconstructed by
$$
{\mathbf f}^{(k+1)}={\mathbf f}^{(k)}+\mu {\bf G}_{\text{w}}({\mathbf f}-{\mathbf f}^{(k)})
$$
where the frame operator is
\begin{align}
{\bf G}_{\text{w}} {\mathbf f}=&\sum_{u\in \mathcal{S}}\langle {\mathbf f}, \sqrt{|\mathcal{N}(u)|}\mathcal{P}_{\omega}(\bm{\delta}_{u})\rangle\sqrt{|\mathcal{N}(u)|}\mathcal{P}_{\omega}(\bm{\delta}_{u})\nonumber\\
=&\mathcal{P}_{\omega}\left(\sum_{u\in \mathcal{S}}|\mathcal{N}(u)|f(u)\bm{\delta}_{u}\right),\nonumber
\end{align}
and the parameter $\mu$ is chosen as
$$
\mu=\frac{2}{A+B}=\frac{1}{1+\gamma^2}.
$$

The property of frame \cite{Christensen_an_2002} shows that the iteration satisfies
$$
\|{\mathbf f}^{(k)}-{\mathbf f}\|\le \left(\frac{B-A}{B+A}\right)^{k}\|{\mathbf f}^{(0)}-{\mathbf f}\|=\left(\frac{2\gamma}{1+\gamma^2}\right)^{k}\|{\mathbf f}^{(0)}-{\mathbf f}\|.
$$
Then Proposition \ref{pro2} is proved.
\end{proof}

\subsection{Proof of Proposition \ref{cor1}}\label{proof2}
\begin{proof}
According to the definition of local propagation and Table \ref{algIPR}, the iteration of IPR can be written as
$$
{\mathbf f}^{(k+1)}={\mathbf f}^{(k)}+{\bf G}({\mathbf f}-{\mathbf f}^{(k)}),
$$
which is initialized by ${\mathbf f}^{(0)}={\bf G}{\mathbf f}$. Notice that ${\mathbf f}\in PW_{\omega}(\mathcal{G})$ and ${\mathbf f}^{(k)}\in PW_{\omega}(\mathcal{G})$ for any $k$, then ${\mathbf f}^{(k)}-{\mathbf f}\in PW_{\omega}(\mathcal{G})$.
As a consequence of Lemma \ref{lemma1},
$$
\|{\mathbf f}^{(k+1)}-{\mathbf f}\|=\|({\mathbf f}^{(k)}-{\mathbf f})-{\bf G}({\mathbf f}^{(k)}-{\mathbf f})\|\le \gamma\|{\mathbf f}^{(k)}-{\mathbf f}\|,
$$
Proposition \ref{cor1} is proved.
\end{proof}

\footnotesize


\begin{thebibliography}{1}
\bibitem{shuman_emerging_2013}
D. I. Shuman, S. K. Narang, P. Frossard, A. Ortega, and P. Vandergheynst, ``The
  emerging field of signal processing on graphs: Extending high-dimensional
  data analysis to networks and other irregular domains,'' \emph{IEEE Signal
  Process. Mag.}, vol.~30, no.~3, pp. 83-98, 2013.

\bibitem{sandryhaila_discrete_2013}
A. Sandryhaila, and J. M. F. Moura, ``Discrete signal processing on graphs,'' \emph{IEEE Trans. Signal Process.}, vol. 61, no. 7, pp. 1644-1656, 2013.

\bibitem{zhu_graph_2012}
X. Zhu and M. Rabbat, ``Graph spectral compressed sensing for sensor networks,'' in \emph{Proc. 37th IEEE Int. Conf. Acoust., Speech, Signal Process. (ICASSP)}, 2012, pp. 2865-2868.

\bibitem{narang_graph_2012}
S. K. Narang, Y. H. Chao, and A. Ortega, ``Graph-wavelet filterbanks for edge-aware image processing,'' in \emph{ Proc. IEEE Stat.
Signal Process. Workshop (SSP'12)}, 2012, pp. 141-144.

\bibitem{gadde_active_2014}
A. Gadde, A. Anis, and A. Ortega, ``Active semi-supervised learning using sampling theory for graph signals,'' in \emph{Proc. 20th ACM SIGKDD Int. Conf. Knowledge Discovery and Data Mining (KDD'14)}, 2014, pp. 492-501.

\bibitem{narang_signal_2013}
S. K. Narang, A. Gadde, and A. Ortega, ``Signal processing techniques for interpolation in graph structured data,''
in \emph{Proc. 38th IEEE Int. Conf. Acoust., Speech, Signal Process. (ICASSP)}, 2013, pp. 5445-5449.

\bibitem{zhang_graph_2008}
F. Zhang and E. R. Hancock, ``Graph spectral image smoothing using the heat kernel,'' \emph{Pattern Recognition}, vol. 41, no. 11, pp. 3328-3342, 2008.

\bibitem{chen_adaptive_2013}
S. Chen, A. Sandryhaila, J. M. F. Moura, and J. Kovacevic, ``Adaptive graph filtering: Multiresolution classification on graphs,'' in \emph{Proc. 1st IEEE Global Conf. Signal and Inform. Process. (GlobalSIP)}, 2013, pp. 427-430.

\bibitem{Crovella_Graph_2003}
M. Crovella and E. Kolaczyk, ``Graph wavelets for spatial traffic analysis,'' in \emph{Proc. 22nd Annu. IEEE Int. Conf. Comput. Commun. (INFOCOM'03)}, 2003, vol. 3, pp. 1848-1857.

\bibitem{Coifman_Diffusion_2006}
R. R. Coifman and M. Maggioni, ``Diffusion wavelets,'' \emph{Appl. Comput. Harmonic Anal.}, vol. 21, no. 1, pp. 53-94, 2006.

\bibitem{hammond_wavelets_2011}
D. K. Hammond, P. Vandergheynst, and R. Gribonval, ``Wavelets on graphs via spectral graph theory,''
\emph{Appl. Comput. Harmonic Anal.}, vol. 30, no. 2, pp. 129-150, 2011.

\bibitem{narang_perfect_2012}
S. K. Narang and A. Ortega, ``Perfect reconstruction two-channel wavelet filter-banks for graph structured data,'' \emph{IEEE Trans. Signal Process.}, vol. 60, no. 6, pp. 2786-2799, 2012.

\bibitem{agaskar_aspectral_2013}
A. Agaskar, and Y. M. Lu, ``A spectral graph uncertainty principle,'' \emph{IEEE Trans. Inform. Theory}, vol. 59, no. 7, pp. 4338-4356, 2013.

\bibitem{shuman_aframework_2013}
D. I. Shuman, M. J. Faraji, and P. Vandergheynst, ``A framework for multiscale transforms on graphs,'' \emph{arXiv preprint arXiv:1308.4942}, 2013.

\bibitem{ekambaram_multiresolution_2013}
V. N. Ekambaram, G. C. Fanti, B. Ayazifar, and K. Ramchandran, ``Multiresolution graph signal processing via circulant structures,'' in \emph{Proc. IEEE Digital Signal Process., Signal Process. Educ. Meeting (DSP/SPE)}, 2013, pp. 112-117.

\bibitem{zhu_approximating_2012}
X. Zhu and M. Rabbat, ``Approximating signals supported on graphs,'' in \emph{Proc. 37th IEEE Int. Conf. Acoust., Speech, Signal Process. (ICASSP)}, 2012, pp. 3921-3924.

\bibitem{narang_localized_2013}
S. K. Narang, A. Gadde, E. Sanou, and A. Ortega, ``Localized iterative methods for interpolation in graph structured data,''
in \emph{Proc. 1st IEEE Global Conf. Signal and Inform. Process. (GlobalSIP)}, 2013, pp. 491-494.

\bibitem{anis_towards_2014}
A. Anis, A. Gadde, and A. Ortega, ``Towards a sampling theorem for signals on arbitrary graphs,'' in
\emph{Proc. 39th IEEE Int. Conf. Acoust., Speech, Signal Process. (ICASSP)}, 2014, pp. 3892-3896.

\bibitem{thanou_parametric_2013}
D. Thanou, D. I. Shuman, and P. Frossard, ``Parametric dictionary learning for graph signals,'' in \emph{Proc. 1st IEEE Global Conf. Signal and Inform. Process. (GlobalSIP)}, 2013, pp. 487-490.

\bibitem{dong_learning_2014}
X. Dong, D. Thanou, P. Frossard P, and P. Vandergheynst, ``Learning graphs from signal observations under smoothness prior,'' \emph{arXiv preprint arXiv:1406.7842}, 2014.

\bibitem{liu_coarsening_2014}
P. Liu, X. Wang and Y. Gu, ``Coarsening graph signal with spectral invariance,'' in \emph{Proc. 39th IEEE Int. Conf. Acoust., Speech, Signal Process. (ICASSP)}, 2014, pp. 1075-1079.

\bibitem{chen_signal_2014}
S. Chen, A. Sandryhaila, et al. ``Signal inpainting on graphs via total variation minimization,'' in \emph{Proc. 39th IEEE Int. Conf. Acoust., Speech, Signal Process. (ICASSP)}, 2014, pp. 8267-8271.

\bibitem{pesenson_sampling_2008}
I. Pesenson, ``Sampling in Paley-Wiener spaces on combinatorial graphs,''
\emph{Trans. Amer. Math. Soc.}, vol. 360, no. 10, pp. 5603-5627, 2008.

\bibitem{pesenson_variational_2009}
I. Pesenson, ``Variational splines and Paley-Wiener spaces on combinatorial graphs,''
\emph{Constructive Approximation}, vol. 29, pp. 1-21, 2009.

\bibitem{pesenson_sampling_2010}
I. Z. Pesenson, and M. Z. Pesenson, ``Sampling, filtering and sparse approximations on combinatorial graphs,''
\emph{J. Fourier Anal. and Applicat.}, vol. 16, no. 6, pp. 921-942, 2010.

\bibitem{fuhr_poincare_2013}
H. F\"{u}hr and I. Z. Pesenson, ``Poincar\'{e} and Plancherel-Polya inequalities in harmonic analysis on weighted combinatorial graphs,'' \emph{SIAM J. Discrete Math.}, vol. 27, no. 4, pp. 2007-2028, 2013.

\bibitem{shuman_vertex_2013}
D. I. Shuman, B. Ricaud, and P. Vandergheynst, ``Vertex-frequency analysis on graphs,'' no. EPFL-ARTICLE-187669, Elsevier, 2013.

\bibitem{shuman_spectrum_2013}
D. I. Shuman, C. Wiesmeyr, N. Holighaus, and P. Vandergheynst, ``Spectrum-adapted tight graph wavelet and vertex-frequency frames,''
no. EPFL-ARTICLE-190280, Inst. Elect. and Electron. Eng., 2013.

\bibitem{chung_spectral_1997}
F. R. K. Chung, \emph{Spectral Graph Theory}, Amer. Math. Soc., 1997.

\bibitem{Christensen_an_2002}
O. Christensen, \emph{An Introduction to Frames and Riesz Bases}, Springer, 2003.

\bibitem{wang_distributed_2014}
X. Wang, M. Wang, and Y. Gu, ``A distributed tracking algorithm for reconstruction of graph signals,'' to appear in \emph{IEEE J. Selected Topics Signal Process.}, June 2015, available at \emph{arXiv preprint arXiv:1502.0297}.

\bibitem{feichtinger_theory_1994}
H. G. Feichtinger, and K. Gr\"{o}chenig, ``Theory and practice of irregular sampling,''
\emph{Wavelets: Math. and Applicat.}, pp. 305-363, 1994.

\bibitem{grochenig_adiscrete_1993}
K. Gr\"{o}chenig, ``A discrete theory of irregular sampling,''
\emph{Linear Algebra and Its Applicat.}, vol. 193, pp. 129-150, 1993.

\bibitem{marvasti_nonuniform_2001}
F. Marvasti, \emph{Nonuniform Sampling: Theory and Practice}, Springer, 2001.

\bibitem{sauer_iterative_1987}
K. D. Sauer, J. P. Allebach, ``Iterative reconstruction of bandlimited images from nonuniformly spaced samples,''
\emph{IEEE Trans. Circuits and Syst.}, vol. 34, no. 12, pp. 1497-1506, 1987.

\bibitem{marvasti_recovery_1991}
F. Marvasti, M. Analoui, and M. Gamshadzahi, ``Recovery of signals from nonuniform samples using iterative methods,''
\emph{IEEE Trans. Signal Process.}, vol. 39, no. 4, pp. 872-878, 1991.

\bibitem{grochenig_reconstruction_1992}
K. Gr\"{o}chenig, ``Reconstruction algorithms in irregular sampling,''
\emph{Math. Computation}, vol. 59, no. 199, pp. 181-194, 1992.

\bibitem{benedetto_irregular_1992}
J. J. Benedetto,``Irregular sampling and frames,''
\emph{Wavelets: A Tutorial in Theory and Applications}, vol. 2, pp. 445-507, 1992.

\bibitem{pesenson_poincare_2004}
I. Pesenson, ``Poincar\'{e}-type inequalities and reconstruction of Paley-Wiener functions on manifolds,''
\emph{J. Geometric Anal.}, vol. 14, no. 1, pp. 101-121, 2004.

\bibitem{feichtinger_recovery_2004}
H. Feichtinger and I. Pesenson, ``Recovery of band-limited functions on manifolds by an iterative algorithm,''
\emph{Contemporary Math.}, vol. 345, pp. 137-152, 2004.

\bibitem{gleich_matlabbgl}
D. Gleich, The MatlabBGL Matlab Library [Online]. Available: http://www.cs.purdue.edu/homes/dgleich/packages/matlab\_bgl/index.html.

\end{thebibliography}
\end{document}